\newtheorem{thm}{Theorem}[section]
\newtheorem{prop}[thm]{Proposition}
\theoremstyle{definition}
\newtheorem{defn}[thm]{Definition}
\newtheorem{ass}[thm]{Assumption}
\theoremstyle{remark}
\newtheorem{rem}[thm]{Remark}
\numberwithin{equation}{section}
\newcommand{\abs}[1]{\left\vert#1\right\vert}
\newcommand{\set}[1]{\left\{#1\right\}}
\newcommand{\Real}{\mathbb R}
\newcommand{\Natural}{\mathbb N}
\newcommand{\A}{\mathcal{A}}
\newcommand{\B}{\mathcal{B}}
\newcommand{\such}{\, | \,}
\newcommand{\limn}{\lim_{n \to \infty}}
\newcommand{\nin}{{n \in \Natural}}
\newcommand{\tir}{{t \in \Real_+}}
\newcommand{\iii}{{i \in I}}
\newcommand{\jii}{{j \in I}}
\newcommand{\prob}{\mathbb{P}}
\newcommand{\hS}{\widehat{S}}
\newcommand{\Exp}{\mathcal E}
\newcommand{\qprob}{\mathbb{Q}}
\newcommand{\expec}{\mathbb{E}}
\newcommand{\expecp}{\expec_\prob}
\newcommand{\expecqi}{\expec_{\qprob^{i} } }
\newcommand{\expecqj}{\expec_{\qprob^{j} } }
\newcommand{\basis}{(\Omega, \, \mathbf{F})}
\newcommand{\basisp}{(\Omega, \, \mathbf{F}, \, \prob)}
\newcommand{\basisqj}{(\Omega, \, \mathbf{F}, \, \qprob^j)}
\newcommand{\F}{\mathcal{F}}
\newcommand{\cadlag}{c\`adl\`ag}
\newcommand{\ud}{\, \mathrm d}
\newcommand{\inner}[2]{\left \langle #1 , \, #2 \right \rangle}
\newcommand{\Stop}{\mathcal{T}}
\newcommand{\StopT}{\Stop_{[0, T]}}
\newcommand{\num}{num\'eraire}
\newcommand{\X}{\mathcal{X}}
\newcommand{\Y}{\mathcal{Y}}
\newcommand{\tX}{\widetilde{X}}
\newcommand{\tS}{\widetilde{S}}
\newcommand{\hX}{\widehat{X}}
\newcommand{\co}{\mathsf{c}}
\newcommand{\pare}[1]{\left(#1\right)}
\newcommand{\bra}[1]{\left[#1\right]}
\newcommand{\dbra}[1]{[\kern-0.15em[ #1 ]\kern-0.15em]}
\newcommand{\dbraco}[1]{[\kern-0.15em[ #1 [\kern-0.15em[}
\newcommand{\bF}{\mathbf{F}}
\newcommand{\indic}{\mathbf{1}}
\newcommand{\dfn}{ := }
\newcommand{\absco}{{<\kern-0.53em<}}
\newcommand{\naone}{NA$_1$}
\newcommand{\indicz}{\indic_{\dbraco{0, \zeta}}}
\newcommand{\EX}{\mathsf{EX}}
\newcommand{\AX}{\mathsf{AX}}
\newcommand{\zi}{(0, \infty]}
\newcommand{\Opt}{\mathcal{O}}
\newcommand{\on}{{; \,}}
\newcommand{\Sym}{\mathbb{S}}
\newcommand{\I}{\mathcal{I}}
\newcommand{\oPhi}{\overline{\Phi}}
\newcommand{\oney}{\mathbf{1}_{\{ |y| \leq 1 \}}}
\begin{document}

\title[]{Valuation and parities for exchange options}%
\author{Constantinos Kardaras}%
\address{Constantinos Kardaras, Department of Statistics, London School of Economics and Political Science, 10 Houghton st, London, WC2A 2AE, UK.}%
\email{k.kardaras@lse.ac.uk}%

\date{\today}%
%\dedicatory{}%
%\commby{}%
 \thanks{I am grateful to Johannes Ruf and two anonymous referees for valuable discussions that improved the presentation of the article.}%
\subjclass[2010]{60H99, 60G44, 91B28, 91B70}
\keywords{Exchange options, parities, change of \num.}%

\date{\today}%
%\dedicatory{}%
%\commby{}%
%----------------------------------------------------------------
\begin{abstract}
Valuation and parities for both European-style and American-style exchange options are presented in a general financial model allowing for jumps, possibility of default and ``bubbles'' in asset prices. The formulas are given via expectations of auxiliary probabilities using the change-of-\num \ technique. Extensive discussion is provided regarding the way that folklore results such as Merton's no-early-exercise theorem and traditional parities have to be altered in this more versatile framework.
\end{abstract}

\maketitle

% ----------------------------------------------------------------

\section*{Introduction}

A multitude of contracts in financial markets can be regarded as options to exchange units of one asset for certain units of another. The first paper to discuss and consider such options is \cite{RePEc:bla:jfinan:v:33:y:1978:i:1:p:177-86}. Building upon the ground-breaking methodology of \cite{citeulike:202505} and \cite{MR0496534}, formulas were provided for the fair value of exchange options for two no-dividend-paying assets in a Black-Scholes-Merton modelling environment. Depending on which of the two assets is chosen as a \num \ in order to denominate wealth, such exchange options can be regarded either of a call or a put type. Under this perspective, and always in the Black-Scholes-Merton model, Merton's no-early-exercise result \cite[Theorem 2]{MR0496534} can be seen to imply that American-style exchange options have the same value as their European-style counterparts; then, the usual put-call parity translates to a single parity between exchange options of either European or American style.

In recent literature, considerable interest has been placed in financial models where certain anomalies exist, a prominent one concerning assets which contain bubbles---see, for example, \cite{MR1339738}, \cite{MR2213778}, \cite{MR2653260}, \cite{MR2732840}, \cite{MR2359365}, \cite{MR2650245}, \cite{KKN}. %(Such bubbles may appear even in the locally riskless bank account, an asset that is traditionally used as a baseline in order to denominate wealth.)
An inspection of papers on the subject reveals several possible directions that one may proceed in the mathematical definition of a bubble. While no attempt will be made here to summarize or consolidate these views, only for illustration purposes we mention the specialized case of complete markets (see, for example, \cite{MR2359365}), where the different definitions essentially coincide: a certain asset contains a bubble if the market allows for arbitrage relative to its cum-dividend\footnote{For notational simplicity, in this paper only no-dividend paying assets are treated.} price process; in other words, there exist free snacks (in the terminology of \cite{MR1774056}) relative to the asset with the bubble. This last fact prevents the existence of an \emph{equivalent} probability which would render the (local) martingale property to wealth processes denominated in units of the asset containing the bubble. Such probability measures are used for valuation of illiquid financial derivative securities; therefore, it would appear that existence of baseline assets containing bubbles presents a hurdle in the development of the theory of financial mathematics. However, a consistent theory of valuation and hedging can still be developed in models where assets with bubbles exist, provided that one utilizes strictly positive local martingale deflators instead of equivalent local martingale measures---the survey article \cite{KarFern} is a thorough reference in this respect.\footnote{Even when an equivalent local martingale measure exists in the market, it may still be the case that some strictly positive local martingale deflator which is not an actual martingale is used for valuation. Indeed, this may happen in cases where utility indifference valuation rules are considered, as is explained in \cite{MR2132189}.} Under appropriate assumptions on the underlying stochastic environment which allow for the inference of existence of probability measures in the spirit of Kolmogorov's extension theorem (as explained, for example, in \cite{para}) local martingale deflators can still define auxiliary probabilities that can be used for valuation; see \cite{exit_m} and Theorem \ref{thm: prob_existnz} later on. It should be noted, however, that these valuation probabilities may fail to be even locally (along a sequence of deterministic times converging to infinity) equivalent to the original probability.

Several results that are folklore in traditional models fail to hold when valuation is done using strictly positive local martingale deflators as opposed to local martingale measures; typical examples of such failure include the aforementioned no-early-exercise theorem for American options, as well as certain parities---for a discussion, see \cite{MR2903625}. However, it is becoming increasingly understood that an alternative viewpoint concerning such results enables the provision of formulas that are valid in these wider-encompassing models, allowing for valuation using local martingale deflators. Such viewpoint also facilitates the understanding of the exact attributes of earlier models that resulted in such formulas. The present paper contributes to the existing literature by providing valuation and parities for exchange options via the change-of-\num \ approach in a general modelling environment where equivalent martingale measures may fail to exist, allowing for jumps and possible default. As mentioned previously, in order to provide formulas in terms of expectations under auxiliary valuation probabilities, mild assumptions have to be enforced on the underlying filtered measurable space---canonical examples of such environments are Markovian models driven by economic factors, a case that is discussed in detail in the paper. Due to the potential failure of existence of equivalent martingale measures with respect to some assets, the value of American exchange options may be higher than the corresponding value of exchange options of European type; a general formula for the early exercise premium (in terms of explosion probabilities, amongst other elements) is provided that covers all models. The latter discrepancy of American and European option values affects the parities: several different parities relating European and American exchange option values are provided.

\smallskip

The structure of the paper is as follows. Section \ref{sec: main} presents the underlying financial framework, while Section \ref{sec: probs} establishes existence of the valuation probabilities and studies the behaviour of ratios of asset prices under these probabilities. In Section \ref{sec: valuation}, several formulas for valuation of European and American exchange options are presented. Finally, Section \ref{sec: parity} explores the different parities between exchange options of both European and American type, including an example involving the three-dimensional Bessel process where explicit formulas are available.

\section{Underlying Framework} \label{sec: main}

\subsection{The set-up} \label{subsec: prelims}

In the later development of the paper, the need will arise to infer existence of probabilities arising from local martingale density processes; in order to ensure such existence, we shall require a special structure for the underlying probability space, which we introduce below.

The set of all possible states of the financial environment is modelled through a Polish space $E$. Consider an additional isolated point $\triangle$ that is appended to $E$ and will model a ``cemetery'' state for the economy. If $\omega: [0, \infty) \mapsto E \cup \set{\triangle}$ is a right-continuous function, define
\[
\zeta (\omega) \dfn \inf \set{\tir \such \omega(t) = \triangle},
\]
where $\zeta$ has the interpretation of the economy's lifetime. With this understanding, let $\Omega$ denote the set of all right-continuous functions $\omega: [0, \infty) \mapsto E \cup \set{\triangle}$ such that $\omega(0) \in E$ and $\omega (t) = \triangle$ holds for all $t \in [\zeta(\omega), \infty)$; in words, $\Omega$ consists of right-continuous paths which are at $E$ at time zero, and remain forever in the cemetery state $\triangle$, once reached. Note that $\zeta(\omega) \in \zi$ holds for all $\omega \in \Omega$.

We denote by $Z = (Z_t)_{\tir}$ the co\"ordinate process on $\Omega$, i.e., for fixed $\tir$ it holds that $Z_t(\omega) = \omega(t)$, for all $\omega \in \Omega$. %Define $(\F^0_t)_{\tir}$ as the smallest filtration that makes $Z$ adapted, as well as $\bF = (\F_t)_{\tir}$ to be the right-continuous augmentation of $(\F^0_t)_{\tir}$.
Define $\bF = (\F_t)_{\tir}$ as the right-continuous augmentation of the smallest filtration that makes $Z$ adapted.
Define also $\F \dfn \bigvee_{\tir} \F_t$. We denote by $\Stop$ will the class of all (possibly infinite-valued) stopping times on $\basis$. Note that the $\zeta \in \Stop$. %With $\F_{\zeta -}$ being the sigma-field generated by events of the form $A_t \cap \set{t < \zeta}$ for all $\tir$ and $A_t \in \F_t$, it is straightforward from the definition of $\Omega$ and the filtration $\bF$ that the equality $\F = \F_{\zeta -}$ is valid. %(Since $\set{0 < \zeta} = \Omega$, we have $\F_0 \subseteq \F_{\zeta-}$.)

\begin{rem} \label{rem: cont_path_factor}
If a model having factors that change in a continuous fashion is desired, $\Omega$ can be chosen to consist of right-continuous functions that are actually continuous on $[0, \zeta[$.
\end{rem}

The following interpretation should be kept in mind throughout the paper: from time $\zeta$ onwards, all economic activity ceases and no financial claims are honoured. Incorporating (stochastic) recovery rate at default is also possible within the present framework; however, we decide to only treat the case of no recovery in order to allow for some simplification in the presented formulas.

\subsection{Assets and stochastic discount factor}

\label{subsec: assets}

On the filtered measurable space $\basis$ satisfying the tenets of Subsection \ref{subsec: prelims}, we postulate the existence of nonnegative \cadlag \ processes $S^i$ for $\iii$, where $I$ is an arbitrary finite index set. Each $S^i$, $\iii$, is modelling the price-process of a no-dividend-paying %\footnote{Alternatively, it may be assumed that each $S^i$, $\iii$, is modeling the price process of an asset where all paid dividends are reinvested in the asset.} 
asset in the financial market. All assets are denominated in the same \num, which will not actually play any role in our treatment since from Section \ref{sec: probs} onwards the assets $(S^i)_{\iii}$ themselves are going to be used as \num s. (See also the discussion after Assumption \ref{ass: main}.)

To keep in par with the interpretation of $\zeta$ as the economy's lifetime, it shall be assumed that $S^i = 0$ holds on the stochastic interval $\dbraco{\zeta, \infty} \  \dfn \set{(\omega, t) \in \Omega \times \Real_+ \such \zeta(\omega) \leq t}$ for all $\iii$.\footnote{This fact is repeated in  Assumption \ref{ass: main} below.} Earlier default for a specific asset is also possible in our framework.

The full probabilistic model for the movement of the asset prices is described by the introduction of a probability $\prob$ on the $\sigma$-algebra $\F$. The symbol ``$\expecp$'' denotes expectation with respect to $\prob$, with analogous notation used for expectation under other probabilities that will eventually appear. Expressions of the form $\expec \bra{\xi \on A}$ for nonnegative $\F$-measurable random variable $\xi$ and $A \in \F$ are shorthand notations for $\expec \bra{\xi \indic_A}$, where ``$\indic_A$'' denotes the indicator of $A$.

The following will be a standing assumption throughout the paper.

\begin{ass} \label{ass: main}
For all $\iii$, $S^i = 0$ holds on $\dbraco{\zeta, \infty}$ and $S^i_0$ is $\prob$-a.s. constant and strictly positive. Furthermore, there exists a nonnegative process $Y$ with $\prob \bra{Y_0 = 1} = 1$ such that $Y S^i$ is a $\prob$-a.s. (\cadlag) local martingale on $\basisp$ for all $\iii$.
\end{ass}

For the remainder of \S \ref{subsec: assets}, we discuss the economic significance of Assumption \ref{ass: main}. In the process, and in order to use previous classic results, additional structural assumptions shall be made. We stress, however, that none of the extra assumptions that appear below will be needed in the remainder of the text.

It is traditional in the field of Mathematical Finance to choose one of the assets $(S^i)_{\iii}$ as a \num, in order to denominate all other wealth; this \num \ is supposed to stay strictly positive for all times that the economy is alive. Wanting to keep symmetry in our framework and be par with classical theory, we define $S^* \dfn \pare{\sum_{\iii} S^i} / \pare{\sum_{\iii} S_0^i}$, and make the additional assumptions that $\set{S^* > 0} = \dbraco{0, \zeta}$ holds up to a $\prob$-indistinguishable set and $\prob \bra{\zeta < \infty} = 0$.\footnote{Even when $\prob \bra{\zeta < \infty} = 0$ holds, the introduction of the cemetery state $\triangle$ in our framework is essential, since the event $\set{\zeta < \infty}$ may have non-zero measure under the probabilities $(\qprob^i)_{\iii}$ that are constructed in Theorem \ref{thm: prob_existnz}.} In this case, define $\hS \dfn (\hS^i)_{\iii}$ via $\hS^i \dfn (S^i / S^*) \indicz$ for all $\iii$, and assume that $\hS$ is a $d$-dimensional semimartingale. For a predictable and $\hS$-integrable process $H$, set
\[
\hX^{H} \dfn \pare{ 1 + \int_{(0, \cdot]} \sum_{\iii} H^i_t \ud \hS^i_t } \indicz,
\]
with the understanding that vector stochastic integration is used. The previous expression for $\hX^H$ gives the value of the portfolio generated by the strategy $H$, denominated in terms of $S^*$. We also define $X^{H} \dfn S^* \hX^{H}$, as well as $\X$ to be the class of all \emph{nonnegative} $X^{H}$, where $H$ is any predictable $\hS$-integrable process $H$. The class $\X$ contains all nonnegative wealth processes that are denominated in the same units as all the assets with price-processes $(S^i)_{\iii}$.\footnote{It is important to note that the class $\X$ does \emph{not} depend on the specific choice of \num, as we now explain. Suppose that  $X^* \in \X$ is such that $\{ X^* > 0 \} = \dbraco{0, \zeta}$ holds up to a $\prob$-indistinguishable set. Define $\tS \dfn (\tS^i)_{\iii}$ via $\tS^i \dfn (S^i / X^*) \indicz$, and note that $\tS$ is a $d$-dimensional semimartingale, since $\hS$ is. For a predictable and $\tS$-integrable process $H$, set $\tX^{H} \dfn \pare{ 1 + \int_{(0, \cdot]} \sum_{\iii} H^i_t \ud \tS^i_t } \indicz$, which gives the value of the portfolio denominated in terms of $X^*$. It can be checked that $\X$ coincides with the class of all nonnegative $X^* \tX^{H}$, where $H$ ranges through all predictable $\tS$-integrable processes.}

Along with $\X$, define the class of all local martingale deflators $\Y$ via
\[
\Y \dfn \set{Y \geq 0 \such Y_0 = 1 \text{ and } Y X \text{ is a \cadlag \ $\bF$-local $\prob$-martingale for all } X \in \X}.
\]
Under very mild condition on the absence of ``free lunches'' in the market, one can ensure that $\Y$ is non-empty and, in fact, contains a strictly positive process.\footnote{Condition NFLVR of \cite{MR1304434} will certainly be sufficient. More precisely, the weaker (than NFLVR) condition \naone \ is equivalent to the statement that $\Y$ contains a strictly positive process; for example, see \cite{MR2972237} or \cite{MR3177411}.} The set $\Y$ is of importance in the problem of utility maximization and elements of $\Y$ can be used in order to compute utility indifference prices---the interested reader should check \cite{MR1722287} and \cite{MR2132189} for these facts. Processes in $\Y$, as in Assumption \ref{ass: main}, are commonly referred to as \textsl{stochastic discount factors}, and are used for the valuation of financial derivatives.

%The following will be a standing assumption throughout the paper.
%
%
%\begin{rem} \label{rem: deflator}
%Existence of a process $Y$ with the prescribed properties mentioned in Assumption \ref{ass: main} with the additional property that it is $\prob$-a.s. strictly positive is intimately related to market viability; more precisely, to the condition of absence of opportunities of \textsl{arbitrage of the first kind}---see \cite{Kard12}. Under an additional \emph{completeness} assumption, the minimal replicating price of a nonnegative $\F_T$-measurable payoff $H_T$, to be paid at time $T \in \Stop$, is equal to $\expecp \bra{Y_T H_T \on T < \zeta}$. (The restriction of the expectation on the event $\set{T < \zeta}$ appearing in the valuation formula is due to the interpretation of $\zeta$ as default time of the whole economy.) Even if the market is not complete, $\expecp \bra{Y_T H_T \on T < \zeta}$ provides a value for the aforementioned contract that will retain the viability of the market. Such a process $Y$ is commonly referred to as a \textsl{stochastic discount factor}, and will be used for the valuation of financial derivatives.
%\end{rem}

\subsection{Markovian factor models} \label{exa: Markov_factor}

We discuss here the validity of Assumption \ref{ass: main} in a wide range of continuous-time Markovian factor models with possible jumps and default. We shall specialize the framework of Subsection \ref{subsec: prelims} to the case $E = \Real^m$ for some $m \in \Natural$. Recall that $Z = (Z_t)_{\tir}$ denotes the coordinate process on $\Omega$. 
%For future reference, define the sequence $(\zeta_n)_{\nin}$ of stopping times on $\basis$ via
%\begin{equation} \label{eq: zeta_n_explicit}
%\zeta_n \dfn \inf \set{\tir \such Z_t \notin [-n, n]^m}, \quad \text{for } \nin.
%\end{equation}

Consider a bounded measurable $a:\Real^m \mapsto \Real^m$, a bounded continuous $c:\Real^m \mapsto \Sym^m_{++}$, where $\Sym^m_{++}$ denotes the space of strictly positive definite symmetric $m \times m$ matrices, as well as $\nu: \Real^m \times \B(\Real^m) \mapsto \Real_+$ such that $\nu(z, \cdot)$ is a $\sigma$-additive measure on $\B(\Real^m)$ for all $z \in \Real^m$, and 
\[
\Real^m \ni z \mapsto \int_\Gamma \pare{1 \wedge |y|^2} \nu(z, \ud y) \text{ is continuous and bounded}, \quad \forall \Gamma \in \B(\Real^m).
\]
With the above notation, and with $\inner{\cdot}{\cdot}$ denoting (sometimes, formally) inner product on $\Real^m$, define the operator $C^\infty_0 (\Real^m) \ni f \mapsto \A(f)$ such that
\begin{align} \label{eq:infin_gen}
\A(f)(z) &\dfn \inner{a}{\nabla f}(z) + \frac{1}{2} \sum_{k=1}^m \sum_{l=1}^m c^{k l} (z) \frac{\partial^2 f}{\partial z^{kl}} (z) \\
\nonumber &+ \int_{\Real^d} \pare{f(z + y) - f(z) - \inner{ y}{\nabla f(z)} \oney } \nu(z, \ud y), \quad f \in C^\infty(\Real^m), \ z \in \Real^m.
\end{align}
Finally, fix a measurable and locally bounded function $\lambda: \Real^m \mapsto \Real_+$ and $z_0 \in \Real^m$. With the previous notation and assumptions, and with $\I$ denoting the identity operator on $C^\infty_0 (\Real^m)$, there exists a unique solution $\prob$ to the martingale problem  associated with $\A - \lambda \I$, with ``killing'' rate function $\lambda$, such that $\prob \bra{Z_0 = z_0} = 1$.\footnote{When $\lambda \equiv 0$, one can consult \cite[Theoreme (13.58)]{MR542115} or \cite[Thereom 4.3]{MR0433614} for existence of a unique solution $\prob^0$ to the corresponding martingale problem. Once the probability $\prob^0$ is constructed, for which $\prob^0 \bra{\zeta < \infty} = 0$, one may extend the probability space and introduce an independent (of $Z$) random variable $\eta$ with unit-rate exponential law, then set
\[
\xi \dfn \inf \set{t \in \Real_+ \ \Big| \ \int_0^t \lambda(Z_{s-}) \ud s > \eta},
\]
and then define $\widetilde{Z} = Z \indic_{\dbraco{0, \xi}} + \triangle \indic_{\dbraco{\xi, \infty}}$. Finally, one defines $\prob$ to be the law of $\widetilde{Z}$ under $\prob^0$ on the canonical space $(\Omega, \F)$, and note that indeed $\prob$ solves the martingale problem associated with $\A - \lambda \I$, satisfying $\prob \bra{Z_0 = z_0} = 1$.} In particular, the continuous part of the quadratic covariation process of $Z$ is given by $\int_0^{\zeta \wedge \cdot} c(Z_{t-}) \ud t$ and the compensator of the jump measure $\mu$ of $Z$ is equal to $\int_0^{\zeta \wedge \cdot} \nu (Z_{t-}, \ud y) \ud t$.

The $m$-dimensional factor process $Z$ will drive the prices of $(d+1)$ financial assets, where $d \in \Natural$. Let $I = \set{0, 1, \ldots, d}$; the index ``$0$'' is reserved for a locally riskless asset, which is typical in the literature. %In order to avoid degeneracies in the market, it is assumed that $d \leq m$. 
Consider a \emph{short rate} function $r: \Real^m \mapsto \Real$; furthermore, for $\iii$ consider \emph{excess rate of return} functions $\alpha^i : \Real^m \mapsto \Real$, functions $\beta^i : \Real^m \mapsto \Real^{m}$ that will control the continuous part of the quadratic variation of the assets and functions $\gamma^i : \Real^m \times \Real^m \mapsto \Real_+$ that will control the relative jump sizes of the asset prices. For reasons of unifying presentation, set also $\alpha^0 : \Real^m \mapsto \Real$, $\beta^0 : \Real^m \mapsto \Real^m$ and $\gamma^0 : \Real^m \times \Real^m \mapsto \Real_+$ to be identically equal to zero. The previous functions are assumed measurable and such that
\begin{equation} \label{eq:assumpt_stock_coeff}
\sup_{z \in (-n,n)^m} \pare{\abs{\pare{r + \alpha_i}(z)} + \inner{\beta^i}{c \beta^i} (z) + \int_{\Real^m} \pare{\gamma^i(z, y) - 1}^2 \nu(z, \ud y)} < \infty, \quad \forall \nin, \ \iii.
\end{equation}
Define processes $(S^i)_{\iii}$, satisfying $S^i = S^i_0 \Exp(U^i) \indicz$ for all $\iii$, where $S^i_0 > 0$, ``$\Exp$'' denotes the stochastic exponential operator throughout, and
\begin{equation} \label{eq: Markov_asset_dyn}
U^i =  \int_0^\cdot \pare{\pare{r+ \alpha^i} (Z_{t-})  \ud t + \inner{\beta^i (Z_{t-})}{\ud Z^\co_t} + \int_{\Real^m} \pare{\gamma^i(Z_{t-}, y) - 1} \pare{\mu(\ud y, \ud t) - \nu(Z_{t-}, \ud y) \ud t}}
\end{equation}
holds for all $\iii$, where the previous process is well defined in view of \eqref{eq:assumpt_stock_coeff}. As was already mentioned, the functions $(\beta^i)_{\iii}$ control the continuous part of the local covariation between the asset-price movement and the driving economic factors, as well as the other assets. Accordingly, since $S^i = S^i_{-} \gamma^i(Z_-, \Delta Z)$ holds for $\iii$, the functions $(\gamma^i)_{\iii}$ control the jumps in the asset-price movement given abrupt changes in the underlying economic factors. Note that individual assets may default before time $\zeta$, since we allow for the opportunity that $\gamma^i$ takes the value $0$ and $\nu(\cdot, \ud y)$ may have atomic parts.

%From the dynamics of the factors and the asset-price processes one obtains $\pare{\ud S^i_t / S^i_t} \pare{\ud X^k_t} = \inner{\sigma^i}{b^k} (X_t) \ud t$, $0 \leq t < \zeta$, for $\iii$ and $k \in \set{1, \ldots, m}$; naturally, the functions $\sigma^i$, $\iii$, have to be appropriately chosen in order to reflect the local covariance between the asset-price movement and the driving economic factors.

In order to define the stochastic discount factor, consider measurable functions $\phi: \Real^m \mapsto \Real^m$ and $\psi: \Real^m \times \Real^m \mapsto (0, \infty)$ with
\begin{equation} \label{eq:assumpt_disc_coeff}
\sup_{z \in (-n,n)^m} \pare{\inner{\phi}{c \phi} (z) + \int_{\Real^m} \pare{\psi(z, y) - 1}^2 \nu(z, \ud y)} < \infty, \quad \forall \nin.
\end{equation}
as well as\footnote{The existence of at least one pair of functions $(\phi, \psi)$ that satisfy \eqref{eq:market_price_risk} follows directly from no-arbitrage considerations. We do ask that one can choose such a pair satisfying the extra local boundedness conditions \eqref{eq:assumpt_disc_coeff}, which is a rather mild technical assumption. 
%[REWRITE THIS. EXPLAIN WHY WE CAN ASSUME THE EXISTENCE OF SUCH THINGS]Once again, in order to avoid degeneracy we assume that the collection $\set{\beta^i \such i \in \set{1, \ldots, d}}$ is linearly independent on $E$, which is equivalent to saying that the $(d \times m)$ matrix-valued function $\beta \equiv (\beta^{ik})_{i \in \set{1, \ldots, d}, \, k \in \set{1, \ldots, m}}$ has (full) rank $d$ on $E$. A least one function $\theta : E \mapsto \Real^m$ satisfying ?? with $\psi \equiv 0$, since the matrix-valued process $\beta$ is assumed to have full rank on $E$. The additional local boundedness of $\theta$ is a mild assumption which will hold given local boundedness conditions on $\pare{\beta \beta^\top}^{-1} \beta \alpha^\top$. Indeed, note that $\alpha = \theta^\top c \beta$ is equivalent to $c \theta = (\beta \beta^\top)^{-1} \beta \alpha^\top$; given that $c^{-1}$ is locally bounded, local boundedness of $(\beta \beta^\top)^{-1} \beta \alpha^\top$ implies that $\theta$ can be chosen locally bounded.
}
\begin{equation} \label{eq:market_price_risk}
\alpha^i(z) + \inner{\phi(z)}{c (z)\beta^i (z)} + \int_{\Real^m}  \pare{\gamma^i (z, y) -1} \pare{\psi(z, y) -1} \nu (z, \ud y) = 0, \quad \forall z \in \Real^m \text{ and } \iii.
\end{equation}
Define the process $Y$ satisfying $Y = \Exp(V) \indicz$, where
\begin{equation} \label{eq: st_disc_fac_markov}
V = \int_0^\cdot \pare{(\lambda - r )(Z_{t-}) \ud t + \inner{\phi (Z_{t-})}{\ud Z^\co_t} + \int_{\Real^m} \pare{\psi(Z_{t-}, y) -1} \pare{\mu(\ud y, \ud t) - \nu(Z_{t-}, \ud y) \ud t}},
\end{equation}
where the previous process is well defined in view of \eqref{eq:assumpt_disc_coeff} and the fact that $\lambda$ is locally bounded. A straightforward use of the integration-by-parts formula shows that $Y S^i = S^i_0 \Exp(V^i) \indicz$, where
\begin{align} \label{eq:densities_markov}
V^i &= \int_0^\cdot \lambda(Z_{t-}) \ud t  + \int_0^\cdot \inner{(\beta^i + \phi) (Z_{t-})}{\ud Z^\co_t} \\ 
\nonumber &+ \int_0^\cdot \pare{ \int_{\Real^m} \pare{\gamma^i (Z_{t-}, y) \psi  (Z_{t-}, y) - 1} \pare{\mu(\ud y, \ud t) - \nu(Z_{t-}, \ud y) \ud t}}, \quad \forall \iii.
\end{align}
Define the nondecreasing sequence $(\zeta_n)_{\nin}$ of stopping times via
\begin{equation} \label{eq: zeta_n_explicit}
\zeta_n \dfn \inf \set{\tir \such Z_t \notin (-n, n)^m}, \quad \text{for } \nin.
\end{equation}
It is straightforward to check that $\zeta_n \leq \zeta$, $\prob \bra{\limn \zeta_n = \zeta} = 1$ and $\limn \prob \bra{\zeta_n < \zeta < \infty} = 0$. Furthermore, \eqref{eq:assumpt_stock_coeff}, \eqref{eq:assumpt_disc_coeff}, the fact that $\gamma^i \psi - 1 = (\gamma^i - 1)(\psi - 1) + (\gamma^i - 1) + (\psi - 1)$  holds identically (which incidentally also shows that the last stochastic integral in \eqref{eq:densities_markov} is well-defined), and local boundedness of $\lambda$ implies by \cite[Theorem 12]{MR515738} that the processes $\big( Y_{\zeta_n \wedge \cdot} S_{\zeta_n \wedge \cdot}^i \big)_{\tir}$ are (true) martingales for all $\nin$ and $\iii$. In particular, $Y S^i$ is a local martingale on $\basisp$ for all $\iii$, and one obtains the validity of Assumption \ref{ass: main} in this extremely versatile setting.

%Define also the sequence $(\zeta_n)_{\nin}$ via
%\begin{equation} \label{eq: zeta_n_explicit}
%\zeta_n \dfn \inf \set{\tir \such X_t \notin [-n, n]^m}, \quad \text{for } \nin.
%\end{equation}

%\begin{rem}
%In the case where $d = m$, the only process $Y$ with $Y_0 = 1$ and $Y_t = 0$ for $t \geq \zeta$ that is capable to render the local martingale property on $\basisp$ to the processes $(Y_{\zeta_n \wedge t} S^i_{\zeta_n \wedge t})_{\tir}$  for all $\iii$ and $\nin$ is the one defined via the dynamics in \eqref{eq: st_disc_fac_markov} with $\theta = \sigma^{-1} \mu$. %(Martingale representation theorem? Completeness?)
%\end{rem}

\section{Valuation Probabilities and Asset Ratios} \label{sec: probs}

\subsection{Valuation probabilities} 

As mentioned in Subsection \ref{subsec: assets}, the process $Y$ of Assumption \ref{ass: main} plays the role of a stochastic discount factor in the market. As such, it will be used for valuation of securities: the present (time zero) value a contract that pays an $\F_T$-measurable nonnegative amount $H_T$ at time $T \in \Stop$ is $\expecp \bra{Y_T H_T \on T < \zeta}$.  It is customary to write valuation formulas in terms of expectation under auxiliary valuation probabilities. In order to obtain the latter from the representation in terms of expectations under $\prob$ and stochastic discounting, a ``baseline'' (or ``\num'') asset has to be chosen in order to denominate wealth. Section \ref{sec: valuation} and Section \ref{sec: parity} deal with valuation and parities for exchange options; for this reason, we refrain from choosing a single asset to use as baseline; rather, a family of probabilities $(\qprob^i)_{\iii}$ will be introduced, one for each asset indexed by $\iii$ being used as a baseline. Care has to be exercised in defining these probabilities, since the candidate ``density processes'' that have to be used in defining them are in general only \emph{local} martingales on $\basisp$. However, as stated in Theorem \ref{thm: prob_existnz} below, the structure of the filtered probability space described in Subsection \ref{subsec: prelims} allows for such construction under Assumption \ref{ass: main}. A proof of Theorem \ref{thm: prob_existnz} in this exact setting appears in \cite{BBKN}; of course, results of similar nature have appeared previously---see, for example, \cite{exit_m}, \cite{Meyer}, \cite{MR1339738}, \cite{MR2653260}, and \cite{Ruf-Perk} %However, since the present setting is general (involving asset prices with jumps and possibility of individual defaults before time $\zeta$), and since Theorem \ref{thm: prob_existnz} is used extensively in other results of the paper, a complete treatment will be provided.

Before the statement of Theorem \ref{thm: prob_existnz}, recall that %$\F_{\tau-}$ for $\tau \in \Stop$ represents the $\sigma$-algebra over $\Omega$ generated by $\F_0$ and the collection of all events of the form $A \cap \set{s < \tau}$ where $\sir$ and $A \in \F_s$. Furthermore, recall that 
the optional sigma-field $\Opt$ on $\Omega \times \Real_+$ is the one generated by all \cadlag \ processes; then, a process is called optional if it is $\Opt$-measurable.

\begin{thm} \label{thm: prob_existnz}
Under Assumption \ref{ass: main}, for each $\iii$ there exists a unique probability $\qprob^i$ on $(\Omega, \F)$ such that the following property is valid: for any nonnegative optional process $H$ on $\basis$,
\begin{equation} \label{eq: measure_change}
\expecp \bra{Y_T H_T S^i_T \on T < \zeta} = S^i_0  \expecqi \bra{ H_T \on T < \zeta} \quad \text{holds for all } T \in \Stop.
\end{equation}
%Furthermore, there exists a sequence $(\zeta_n)_{\nin}$ of stopping times that foretells $\zeta$ under each of the probabilities $\qprob^i$, $\iii$.
\end{thm}

\begin{rem} \label{rem: zero_prob_zero}
For $\iii$ and $T \in \Stop$, a use of \eqref{eq: measure_change} with $H = \indic_{\set{S^i = 0}}$ gives $\qprob^i \bra{S^i_T = 0, \, T < \zeta} = 0$.
\end{rem}

\begin{rem} \label{rem: no_expl_iff_mart}
Assumption \ref{ass: main} and a straightforward application of the conditional version of Fatou's lemma implies that $Y S^i$ is a (nonnegative) supermartingale on $\basisp$ for all $\iii$. Using $H \equiv 1$ in \eqref{eq: measure_change} and taking $T \in \Stop$ to be equal to $\tir$, it follows that $S_0^i \qprob^i \bra{t < \zeta } = \expecp \bra{Y_t S^i_t \on t < \zeta } = \expecp \bra{Y_t S^i_t}$ holds for all $\tir$ and $\iii$, where the last equation follows from the fact that $S^i_t = 0$ holds on $\set{\zeta \leq t}$. It then follows in a straightforward way that $\qprob^i \bra{\zeta < \infty} = 0$ holds for some $\iii$ if and only if the process $(Y_t S^i_t)_{\tir}$ is a (true) martingale on $\basisp$.
\end{rem}

\subsection{Foretellability of $\zeta$} \label{subsec:foretellability}

In general, $\zeta$ is \emph{not} a predictable\footnote{Following standard terminology from the general theory of stochastic processes (see, for example,  \cite{MR1943877}), a stopping time $\tau \in \Stop$ is predictable on $\basis$ if the stochastic interval $\dbraco{\tau, \infty}$ is a predictable set; note that this notion does not take into account any underlying probability on the filtered probability space.} stopping time on $\basis$. However, as we shall see, it actually is predictable on $(\Omega, \, \bF^{\qprob^i})$ for all $\iii$, where $\bF^{\qprob^i}$ is the usual $\qprob^i$-augmentation of $\bF$. We first give an essential definition.

\begin{defn}
Let $\qprob$ be a probability on $(\Omega, \F)$. A nondecreasing sequence $(\zeta_n)_{\nin}$ of stopping times will be said to \textsl{foretell $\zeta$ under $\qprob$} if $\zeta_n \leq \zeta$ for all $\nin$, $\qprob \bra{\zeta_n < \zeta, \, \forall \nin} = 1$ and $\qprob \bra{\limn \zeta_n = \zeta} = 1$.
\end{defn}

\begin{prop} \label{prop: foretellable}
Let $(\zeta_n)_{\nin}$ be a nondecreasing sequence of stopping times such that $\zeta_n \leq \zeta$ for all $\nin$, $\prob \bra{\limn \zeta_n = \zeta} = 1$ and $\big( Y_{\zeta_n \wedge t} S^i_{\zeta_n \wedge t} \big)_{\tir}$ is a uniformly integrable martingale on $\basisp$ for all $\iii$. Then, $(\zeta_n)_{\nin}$ foretells $\zeta$ under each of the probabilities $\qprob^i$, $\iii$.
\end{prop}

\begin{proof}
Applying \eqref{eq: measure_change} with $H \equiv 1$ and $T = \zeta_n$ gives $S_0^i \qprob^i \bra{ \zeta_n < \zeta} = \expecp \big[ Y_{\zeta_n} S^i_{\zeta_n} ; \zeta_n < \zeta \big]$, for all $\nin$ and $\iii$. Since $S^i_{\zeta_n} \indic_{\set{\zeta_n = \zeta}} = S^i_{\zeta} \indic_{\set{\zeta_n = \zeta}} = 0$, $\qprob^i \bra{ \zeta_n < \zeta} = (1 / S^i_0) \expecp \big[ Y_{\zeta_n} S^i_{\zeta_n} \big] = 1$ follows for all $\nin$ and $\iii$. Therefore, $\qprob^i \bra{\zeta_n < \zeta, \, \forall \nin} = 1$. Continuing, let $\zeta_\infty \dfn \limn \zeta_n$. Another application of \eqref{eq: measure_change}  with $H \equiv 1$ and $T = \zeta_\infty$ gives $S_0^i \qprob^i \bra{ \zeta_\infty < \zeta} = \expecp \big[ Y_{\zeta_\infty} S^i_{\zeta_\infty} ; \zeta_\infty < \zeta \big] = 0$ for all $\iii$, in view of the fact that $\prob \bra{\zeta_\infty < \zeta} = 0$. Therefore, we obtain $\qprob^i \bra{\zeta_\infty < \zeta} = 0$, or $\qprob^i \bra{\zeta_\infty = \zeta} = 1$ for all $\iii$, which shows that $(\zeta_n)_{\nin}$ of stopping times that foretells $\zeta$ under each of the probabilities $\qprob^i$, $\iii$.
\end{proof}

\begin{rem} \label{rem:foretellable_exists}
Note that sequences $(\zeta_n)_{\nin}$ satisfying the tenets of Proposition \ref{prop: foretellable} certainly exist. For a specific example, define
\[
\zeta_n \dfn \inf \set{\tir \ \big| \  Y_t \max_{\iii} S^i_t > n} \wedge \zeta, \quad \forall \nin.
\]
\end{rem}

\begin{rem} \label{rem:localisation_prob}
Let $(\zeta_n)_{\nin}$ be any localising sequence as in Proposition \ref{prop: foretellable}. With $L^i \dfn Y S^i / S^i_0$ for $\iii$, \eqref{eq: measure_change} implies that $L^i_{\zeta_n} = L^i_{\zeta_n} \indic_{\set{\zeta_n < \zeta}}$ holds for all $\iii$ and $\nin$. Therefore, $L^i_{\zeta_n}$ is the density of $\qprob^i$ with respect to $\prob$ on $\F_{\zeta_n}$ for all $\iii$ and $\nin$. This fact can help in obtaining the behaviour of processes under $\qprob^i$ for $\iii$; see Example \ref{exa: Markov_factor_reprise} below for an illustration.

Although $\qprob^i$ is absolutely continuous with respect to $\prob$ on $\F_{\zeta_n}$ for all $\iii$ and $\nin$, it should be noted that there is no general relationship between $\qprob^i$ and $\prob$ on $\F$.
\end{rem}

\subsection{Markovian factor models, continued} \label{exa: Markov_factor_reprise}

We proceed with an illustration of Theorem \ref{thm: prob_existnz} in the framework of Subsection \ref{exa: Markov_factor}, from which we retain all notation. Let $Y = \Exp(V) \indicz$, where $V$ is given in \eqref{eq: st_disc_fac_markov}, and fix $\jii$. Recall the sequence $(\zeta_n)_{\nin}$ of \eqref{eq: zeta_n_explicit}. By Proposition \ref{prop: foretellable}, $(\zeta_n)_{\nin}$ foretells $\zeta$ under $\qprob^j$.

A straightforward use of Girsanov's theorem via localization (see Remark \ref{rem:localisation_prob}) over $(\zeta_n)_{\nin}$ implies that $Z$ under $\qprob^j$ solves the martingale problem with possible explosion associated with the operator $\A^{\qprob^j}$ that is given as in \eqref{eq:infin_gen}, with $a$ there replaced by $a_{\qprob^j} \dfn a + c (\beta^j + \phi) + \int_{\Real^m} \pare{\gamma^j (\cdot, y) \psi  (\cdot, y) - 1} y \oney \nu(\cdot, \ud y)$, $c$ staying the same, and $\nu(\cdot, \ud y)$ replaced by $\nu_{\qprob^j} (\cdot, \ud y) \dfn \gamma^j (\cdot, y) \psi  (\cdot, y) \nu(\cdot, \ud y)$. Furthermore, recalling \eqref{eq: Markov_asset_dyn}, and noting (again, as a consequence of Girsanov's theorem) that the continuous local martingale part of $Z$ under $\qprob^j$ on $\dbraco{0, \zeta}$ is\footnote{Note that the process $Z^{\co, \qprob^j}$ is well defined and finitely-valued on the stochastic interval $\dbraco{0, \zeta}$, which is indistinguishable from $\bigcup_{\nin} \dbra{0, \zeta_n}$ under $\qprob^j$; however, it may happen that it explodes at $\zeta$.} $Z^{\co, \qprob^j} = Z_{\zeta \wedge \cdot}^\co - \int_0^{\zeta \wedge \cdot} c (Z_{t-}) (\beta^i + \phi)(Z_{t-}) \ud t$, we obtain that $S^i = S^i_0 \Exp(U^i) \indicz$, where $U^i$ is defined on $\dbraco{0, \zeta}$ via
\begin{align*}
U^i &=  \int_0^\cdot  \pare{r+ \alpha^i + \inner{\beta^i}{c (\beta^j + \phi)}} (Z_{t-}) \ud t \\
&+ \int_0^\cdot \pare{ \int_{\Real^m} \pare{\gamma^i(Z_{t-}, y) - 1} \pare{\gamma^j(Z_{t-}, y) \psi(Z_{t-}, y) - 1} \nu(Z_{t-}, \ud y)}   \ud t \\
&+ \int_0^\cdot \pare{ \inner{\beta^i (Z_{t-})}{\ud Z^{\co, \qprob^j}_t} + \int_{\Real^m} \pare{\gamma^i(Z_{t-}, y) - 1} \pare{\mu(\ud y, \ud t) - \nu_{\qprob^j} (Z_{t-}, \ud y) \ud t}}
, \quad \forall \iii.
\end{align*}
Recalling \eqref{eq:market_price_risk}, after simple algebra we obtain that, on $\dbraco{0, \zeta}$,
\begin{align*}
U^i &=  \int_0^\cdot \pare{ \pare{r + \inner{\beta^i}{c \beta^j}} (Z_{t-} ) +  \int_{\Real^m} \pare{\gamma^i(Z_{t-}, y) - 1} \pare{\gamma^j(Z_{t-}, y) - 1} \psi(Z_{t-}, y)  \nu(Z_{t-}, \ud y)} \ud t \\
&+ \int_0^\cdot \pare{ \inner{\beta^i (Z_{t-})}{\ud Z^{\co, \qprob^j}_t} + \int_{\Real^m} \pare{\gamma^i(Z_{t-}, y) - 1} \pare{\mu(\ud y, \ud t) - \nu_{\qprob^j} (Z_{t-}, \ud y) \ud t}}
, \quad \forall \iii.
\end{align*}
In the setting of this example, note that $S^i / S^j = (S^i_0 / S^j_0) \Exp(U^{ij})$ holds on $\dbraco{0, \zeta}$, where
\begin{align*}
U^{ij} &=   \int_0^\cdot \inner{(\beta^i - \beta^j)(Z_{t-})}{\ud Z^{\co, \qprob^j}_t}  \\
&+ \int_0^\cdot \pare{\int_{\Real^m} \frac{\gamma^i(Z_{t-}, y) - \gamma^j(Z_{t-}, y)}{\gamma^j(Z_{t-}, y)} \pare{\mu(\ud y, \ud t) - \nu_{\qprob^j} (Z_{t-}, \ud y) \ud t}}
, \quad \forall \iii.
\end{align*}
The latter implies that the processes $S^i$, when denominated in units of the asset $\jii$,  become local martingales on $\basisqj$ on each of the stochastic intervals $\dbra{0, \zeta_n}$ for $\nin$. It then follows in a straightforward way by use of Fatou's lemma that $\pare{S^i / S^j} \indic_{\set{S^j > 0}}$ is a nonnegative supermartingale on $\basisqj$ for all $\iii$. The behaviour of asset-price ratios in a general setting is taken up in Subsection \ref{subsec: ratios} below.

\subsection{Asset-price ratio processes}  \label{subsec: ratios}

Define the family of nonnegative processes
\begin{equation} \label{eq: rel}
R^{i j} \dfn \pare{\frac{S^i}{S^j} } \indic_{\set{S^j > 0}}, \quad  \iii \text{ and } \jii.
\end{equation}
In words, $R^{ij}$ represents the asset-price process $\iii$ denominated in units of the asset-price process $\jii$, as long as the latter asset has not defaulted yet.
%For any $\tau \in \Stop$, note that the fact that $S^j_\tau = 0$ holds on $\set{\zeta \leq \tau}$ implies that $R^{ij}_\tau = R^{ij}_\tau \indic_{\set{\zeta > \tau}}$; it then follows that $R^{ij}_\tau \in \F_\tau \wedge \F_{\zeta-}$ for all $\iii$ and $\jii$. In particular, the quantity $\expecqj \big[ R^{ij}_\tau \big]$ is well defined for all $\tau \in \Stop$, $\iii$ and $\jii$.
By Theorem \ref{thm: prob_existnz}, for any $\iii$, $\jii$, and nonnegative optional process $H$ on $\basis$ and any $T \in \Stop$, it holds that
\begin{equation} \label{eq: measure_change_R}
S_0^j \expecqj \bra{R^{ij}_T H_T \on T < \zeta } = \expecp \bra{ S_T^i H_T \on S^j_T > 0, \, T < \zeta} = S_0^i \expecqi \bra{H_T \on S^j_T > 0, \, T < \zeta }.
\end{equation}
The next is a result in the spirit of the supermartingale optional sampling theorem.

\begin{prop} \label{prop: supermart_rel}
Under Assumption \ref{ass: main}, the process $R^{i j}$ is a (nonnegative)  supermatingale on $\basisqj$ for all $\iii$ and $\jii$.
\end{prop}

\begin{proof}
It suffices to show that $\expecqj \big[ R^{ij}_\tau  \big] \leq \expecqj \big[ R^{ij}_\sigma \big]$ holds for all fixed $\iii$, $\jii$ and $\sigma \in \Stop$ and $\tau \in \Stop$ with $\sigma \leq \tau$. Note that $R^{ij} = R^{i j} \indic_{\dbraco{0, \zeta}}$; therefore, we need to show that 
$\expecqj \big[ R^{ij}_\tau \on \tau < \zeta \big] \leq \expecqj \big[ R^{ij}_\sigma \on \sigma < \zeta \big]$. The first equality in \eqref{eq: measure_change_R} applied twice gives $S_0^j \expecqj \big[ R^{ij}_{\sigma} \on \sigma < \zeta  \big] = \expecp \big[ Y_\sigma S^i_{\sigma} \on S^j_{\sigma} > 0, \, \sigma < \zeta \big]$ and $S^j_0 \expecqj \big[ R^{ij}_{\tau} \on \tau < \zeta \big] = \expecp \big[ Y_\tau S^i_{\tau} \on S^j_{\tau} > 0, \, \tau < \zeta  \big]$. Therefore, $\expecqj \big[ R^{ij}_\tau \on \tau < \zeta \big] \leq \expecqj \big[ R^{ij}_\sigma \on \sigma < \zeta \big]$. is equivalent to $\expecp \big[ Y_\tau S^i_{\tau} \on S^j_{\tau} > 0, \, \tau < \zeta  \big] \leq \expecp \big[ Y_\sigma S^i_{\sigma} \on S^j_{\sigma} > 0, \, \sigma < \zeta  \big]$. Recall from Remark \ref{rem: no_expl_iff_mart} that, under Assumption \ref{ass: main}, $Y S^j$ is a nonnegative supermartingale on $\basisp$; therefore, it follows that $\prob \big[S^j_\sigma = 0, \, Y_\tau > 0, \, S^j_\tau > 0, \, \tau < \zeta \big] = 0$. The last fact combined with $\set{\tau < \zeta } \subseteq \set{\sigma < \zeta }$ implies the string of inequalities $Y_\tau \indic_{\set{S^j_\tau > 0, \, \tau < \zeta }} \leq Y_\tau \indic_{\set{S^j_\sigma > 0, \, \tau < \zeta}} \leq Y_\tau \indic_{\set{S^j_\sigma > 0, \, \sigma < \zeta }}$, holding modulo $\prob$. In turn, the last fact implies the first inequality in
\[
\expecp \big[ Y_\tau S^i_{\tau} \on S^j_{\tau} > 0, \, \tau < \zeta  \big] \leq \expecp \big[ Y_\tau S^i_{\tau} \on S^j_{\sigma} > 0, \, \sigma < \zeta   \big] \leq \expecp \big[ Y_\sigma S^i_{\sigma} \on S^j_{\sigma} > 0, \, \sigma < \zeta  \big],
\]
where the second equality follows from the fact that the process $Y S^i$ is a supermartingale on $\basisp$ and the optional sampling theorem for nonnegative supermartingales---see, for example, \cite[\S 1.3.C]{MR917065}. The proof is complete.
\end{proof}

In Section \ref{sec: valuation}, we shall make use of the family of random variables
\begin{equation} \label{eq: rho_ij}
\rho^{ij} \dfn \liminf_{t \uparrow \zeta} R^{ij}_{t}, \quad \iii \text{ and } \jii,
\end{equation}
where the notation ``$ \liminf_{t \uparrow \zeta}$'' is used to signify that a left-hand-side inferior limit is considered. If $(\zeta_n)_{\nin}$ is any sequence that foretells $\zeta$ under all $\qprob^i$, $\iii$, %By Definition \ref{defn: approx_seq} and Assumption \ref{ass: main}, $R^{ij}_{\zeta_n} = R^{ij}_{\zeta_n} \indic_{\set{\zeta_n < \zeta }}$ holds; in particular, $R^{ij}_{\zeta_n}$ is $\pare{\F_{\zeta_n} \cap \F_{\zeta -}}$-measurable for all $\nin$, $\iii$ and $\jii$. Since $\qprob^j \bra{\zeta_n < \zeta, \text{ for all } \nin} = 1$ holds for all $\jii$ under Assumption \ref{ass: main} by Theorem \ref{thm: prob_exist}, it follows in a straightforward way from Proposition \ref{prop: OST_rel} that $(R^{ij}_{\zeta_n})_{\nin}$ is a nonnegative supermartingale on the discrete-time stochastic basis $\big( \Omega,\, (\F_{\zeta_n} \cap \F_{\zeta -})_{\nin}, \, \qprob^j \big)$ for all $\iii$ and $\jii$. Therefore, in view of
the nonnegative supermartingale convergence theorem \cite[\S 1.3.C]{MR917065} implies that, for all $\iii$ and $\jii$, on $\basisqj$ the $\F$-measurable random variable $\rho^{ij}$ is $\Real_+$-valued and the ``$\liminf$'' in \eqref{eq: rho_ij} is an actual limit.

\section{Valuation Formulas for Exchange Options} \label{sec: valuation}

\subsection{Valuation formulas for European-style exchange options}

Given the stochastic discount factor $Y$ of Assumption \ref{ass: main}, define the value of a European option to exchange asset $\iii$ for asset $\jii$ at time $T \in \Stop$ as
\begin{equation} \label{eq: eur_value_original}
\EX^{ij}(T) \dfn \expecp \bra{Y_T (S^j_T - S^i_T)_+  \on T < \zeta }.
\end{equation}
In view of Theorem \ref{thm: prob_existnz}, note the validity of the relationships $\EX^{ij}(T) \leq \expecp \big[ Y_T S^j_T \on T < \zeta \big] = S_0^j \qprob^j \big[ T < \zeta \big] \leq S_0^j$ for all $\iii$, $\jii$ and $T \in \Stop$.

\begin{rem}
Under Assumption \ref{ass: main}, $S^i_T = 0$ holds on $\set{\zeta \leq T}$ for all $\iii$. It follows that the indicator of the event $\set{T < \zeta}$ inside the expectation in \eqref{eq: eur_value_original} may be omitted. The same holds for several equations that will appear below (although not all); we choose to keep the indicator in order to explicitly reinforce the convention that no claims are honoured from time $\zeta$ onwards.
\end{rem}

The next result gives several representations for the value of European-style exchange options. Recall from \eqref{eq: rel} the definition of the collection of processes $R^{ij}$ for $\iii$ and $\jii$.

\begin{prop} \label{prop: Eur_option}
For all $\iii$, $\jii$ and $T \in \Stop$, the following formulas are valid:
\begin{eqnarray*}
\EX^{ij}(T)  &=& S^j_0 \qprob^j \big[ S^i_T < S^j_T , \ T < \zeta \big] - S^i_0 \qprob^i \big[ S^i_T < S^j_T, \ T < \zeta \big] \\
&=& S^j_0 \qprob^j \big[S^i_T \leq S^j_T, \ T < \zeta\big] - S^i_0 \qprob^i \big[S^i_T \leq S^j_T, \ T < \zeta\big] \\
&=& S^j_0  \expecqj \bra{ \big( 1 - R^{ij}_T \big)_+ \on T < \zeta } \\
&=& S^i_0  \expecqi \bra{ \big( R^{ji}_T - 1  \big)_+ \on T < \zeta } + S^j_0 \qprob^j \bra{S^i_T = 0, \, T < \zeta}.
\end{eqnarray*}
\end{prop}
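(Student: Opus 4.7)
The plan is to derive each of the four representations by rewriting $(S^j_T - S^i_T)_+$ in a form that is a sum or difference of terms of the shape $S^k_T \cdot H_T$ for a nonnegative $\F_T$-measurable $H_T$ and $k \in \set{i,j}$, and then applying the change-of-num\'eraire identity \eqref{eq: measure_change} of Theorem \ref{thm: prob_exist} to convert each $\prob$-expectation of $Y_T S^k_T H_T \indic_{\set{T < \zeta}}$ into $S^k_0$ times a $\qprob^k$-expectation.

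First I would use the additive splitting
$(S^j_T - S^i_T)_+ = S^j_T \indic_{\set{S^i_T < S^j_T}} - S^i_T \indic_{\set{S^i_T < S^j_T}}$,
insert it into the definition \eqref{eq: eur_value_original} of $\EX^{ij}(T)$ and apply \eqref{eq: measure_change} once with num\'eraire $j$ and $H = \indic_{\set{S^i < S^j}}$, and once with num\'eraire $i$ and the same $H$; this yields the first formula. The second formula is an immediate corollary, because $(S^j_T - S^i_T) \indic_{\set{S^i_T = S^j_T}} = 0$, so the strict inequality inside each event can be replaced by a weak one without altering either of the two original $\prob$-expectations, and hence without altering either of the two $\qprob$-probabilities that appear after the change of num\'eraire.

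The third formula is obtained from the multiplicative identity $(S^j_T - S^i_T)_+ = S^j_T (1 - R^{ij}_T)_+$, which is valid both on $\set{S^j_T > 0}$ (where $R^{ij}_T = S^i_T / S^j_T$ by \eqref{eq: rel}) and on $\set{S^j_T = 0}$ (where both sides vanish, using $S^i_T \geq 0$ and $R^{ij}_T = 0$). Applying \eqref{eq: measure_change} with num\'eraire $j$ and $H = (1 - R^{ij})_+$ gives the claim. The fourth formula is derived analogously, from the case-wise decomposition
\begin{equation*}
(S^j_T - S^i_T)_+ = S^i_T (R^{ji}_T - 1)_+ + S^j_T \indic_{\set{S^i_T = 0}},
\end{equation*}
which is checked separately on $\set{S^i_T > 0}$ (where $R^{ji}_T = S^j_T/S^i_T$ and the second summand vanishes) and on $\set{S^i_T = 0}$ (where the first summand vanishes and the second equals $S^j_T = (S^j_T - 0)_+$). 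Applying \eqref{eq: measure_change} to the first summand (with num\'eraire $i$ and $H = (R^{ji} - 1)_+$) and to the second (with num\'eraire $j$ and $H = \indic_{\set{S^i = 0}}$) produces the stated representation.

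The only mildly delicate point is the bookkeeping of the null-price sets $\set{S^j_T = 0}$ and $\set{S^i_T = 0}$, where the convention \eqref{eq: rel} forces $R^{ij}_T$ or $R^{ji}_T$ to vanish; this must be handled so that the multiplicative decompositions hold as identities on all of $\Omega$ rather than merely off those sets. Beyond that, no argument deeper than a direct invocation of \eqref{eq: measure_change} is required.
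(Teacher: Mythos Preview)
Your proposal is correct and follows essentially the same route as the paper: the same four algebraic decompositions of $(S^j_T - S^i_T)_+$ are used, and each term is then passed through \eqref{eq: measure_change} with the appropriate choice of \num\ and $H$. Your explicit verification of the multiplicative identities on the sets $\set{S^j_T = 0}$ and $\set{S^i_T = 0}$ is slightly more detailed than the paper's, but the argument is otherwise identical.
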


\begin{proof}
Fix $\iii$ and $\jii$. Since $(S^j - S^i)_+ = S^j \indic_{\set{S^i < S^j}} - S^i \indic_{\set{S^i < S^j}} = S^j \indic_{\set{S^i \leq S^j}} - S^i \indic_{\set{S^i \leq S^j}}$, the first two equalities follow in a straightforward way from \eqref{eq: measure_change}. Continuing note that $(S^j - S^i)_+ = (S^j - S^i)_+ \indic_{\set{S^j > 0}} =  S^j (1 - R^{ij})_+$ holds. Using $H = (1 - R^{ij})_+$ in \eqref{eq: measure_change} (with $j$ replacing $i$ there), the third equality follows immediately. Furthermore, upon noting that $(S^j - S^i)_+ = (S^j - S^i)_+ \indic_{\set{S^i > 0}} + S^j \indic_{\set{S^i = 0}} = S^i (R^{ji} - 1)_+ + S^j \indic_{\set{S^i = 0}}$ and using \eqref{eq: measure_change} twice, once with $H = (R^{ji} - 1)_+$ and another time with $H = \indic_{\set{S^i = 0}}$ (and $j$ replacing $i$ there), the last equality follows.
\end{proof}

\begin{rem}
Fix $\jii$ and suppose that $\qprob^j \bra{\zeta < \infty} = 0$ holds, which in view of Remark \ref{rem: no_expl_iff_mart} is equivalent to the process $(Y_t S^j_t)_{\tir}$ being an actual martingale on $\basisp$. In that case, since $\qprob^j \bra{T < \zeta} = 1$ holds for all $T \in \Stop$ with $T < \infty$, a combination of % $\EX^{ij}(T) = S^j_0  \expecqj \big[ \big( 1 - R^{ij}_T \big)_+ \big]$ holds for all $\iii$ and $T \in \Stop$ by
Proposition \ref{prop: supermart_rel} and Proposition \ref{prop: Eur_option}, %. On the other hand, implies the validity of the inequality $\expecqj \big[ R^{ij}_\tau \big] \leq \expecqj \big[ R^{ij}_\sigma \big]$ for all $\sigma \in \Stop$ and $\tau \in \Stop$ such that $\sigma \leq \tau < \infty$ holds.
the convexity of the function $\Real \ni x \mapsto x_+ \in \Real_+$ and Jensen's inequality give $\EX^{ij} (\sigma) \leq \EX^{ij} (\tau)$ whenever $\sigma \in \Stop$ and $\tau \in \Stop$ are such that $\sigma \leq \tau < \infty$ holds. It follows that the value $\EX^{ij}(T)$ of the European exchange option is non-decreasing for finite maturities $T \in \Stop$.

In contrast to the situation where $\zeta$ is $\qprob^j$-a.s. infinite for some $\jii$, when $\qprob^j \bra{\zeta < \infty} > 0$ the previous monotonicity property need not hold, due to the non-triviality of the indicator of the event $\set{T < \zeta}$ in the expression $\EX^{ij}(T) = S^j_0  \expecqj \big[ \big( 1 - R^{ij}_T \big)_+ \on T < \zeta \big]$. The latter event is nonincreasing in $T$ and may result in reversal of the inequality $\EX^{ij} (\sigma) \leq \EX^{ij} (\tau)$ whenever $\sigma \in \Stop$ and $\tau \in \Stop$ are such that $\sigma \leq \tau < \infty$ holds. In fact, an example presented in \cite{MR2653260} shows a case where the function $\Real_+ \ni T \mapsto \EX^{ij} (T)$ is initially strictly increasing and then strictly decreasing. 
\end{rem}

\begin{rem}
The representation $\EX^{ij}(T) = S^j_0  \expecqj \big[ \big( 1 - R^{ij}_T \big)_+ \on T < \zeta \big]$ gives the value of the exchange option in terms of a put option on the asset $\iii$ by considering asset $\jii$ as a \num. Similarly, the expression $\EX^{ij}(T) = S^i_0  \expecqi \big[ \big( R^{ji}_T - 1  \big)_+ \on T < \zeta \big] + S^j_0 \qprob^j \bra{S^i_T = 0, \, T < \zeta}$ follows from the use of asset $\iii$ as a \num, in terms of a call option on asset $\jii$. %In fact, upon writing $(S^j - S^i)_+ = (S^j - S^i)_+ \indic_{\set{S^i > 0}} + S^j \indic_{\set{S^i = 0}} = S^i (R^{ji} - 1)_+ + S^j \indic_{\set{S^i = 0}}$ and using \eqref{eq: measure_change} twice, once with $H = (R^{ji} - 1)_+$ and another time with $H = \indic_{\set{S^i = 0}}$ (and $j$ replacing $i$ there), it follows that 
%\[
%\EX^{ij}(T) = S^i_0  \expecqi \bra{ \big( R^{ji}_T - 1  \big)_+ \on T < \zeta } + S^j_0 \qprob^j \bra{S^i_T = 0, \, T < \zeta}.
%\]
Note however, an asymmetry between the two representations, since the equality $\EX^{ij}(T) = S^i_0  \expecqi \big[ \big( R^{ji}_T - 1  \big)_+ \on T < \zeta \big]$ is actually valid only if $\qprob^j \bra{S^i_T = 0, \, T < \zeta} = 0$ for $T \in \Stop$. 
\end{rem}

\subsection{Valuation formulas for American-style exchange options}

For $T \in \Stop$ define $\StopT$ as the class of all $\tau \in \Stop$ such that $0 \leq \tau \leq T$ holds. Given the process $Y$ of Assumption \ref{ass: main}, the value of an American option to exchange asset $\iii$ for asset $\jii$ up to time $T$ is defined to be\footnote{For a justification on why this definition of the value of an American-style option is reasonable, the interested reader can check \cite{MR2903625}.}
\begin{equation} \label{eq: amer_value_original}
\AX^{ij}(T) \dfn \sup_{\tau \in \StopT} \expecp \bra{Y_\tau (S^j_\tau - S^i_\tau)_+ \on \tau < \zeta  } = \sup_{\tau \in \StopT} \EX^{ij}(\tau).
\end{equation}
The inequalities $\EX^{ij} (T) \leq \AX^{ij} (T) \leq S^j_0$ hold for all $\iii$, $\jii$ and $T \in \Stop$. Proposition \ref{prop: Ame_option} provides, \emph{inter alia}, a formula for the early exercise premium $\AX^{ij} (T) - \EX^{ij} (T)$ of the American versus the European option. Recall from \eqref{eq: rho_ij} the random variables $\rho^{ij}$ for $\iii$ and $\jii$. 
%The next result gived the early exarcise premium of American exchange option under the validity of any of the equivalent conditions of Proposition \ref{prop: ass_amer}.

\begin{prop} \label{prop: Ame_option}
Fix $\iii$, $\jii$, $T \in \Stop$, as well as any sequence $(\zeta_n)_{\nin}$ which foretells $\zeta$ under all $\pare{\qprob^i}_{\iii}$ (see Remark \ref{rem:foretellable_exists}). Then, the following are true:
\begin{enumerate}
	\item The sequence $\pare{\EX^{ij} (T \wedge \zeta_n)}_{\nin}$ is nondecreasing. Furthermore,
\begin{equation} \label{eq: Amer_approx}
\AX^{ij} (T) = \limn \EX^{ij} (T \wedge \zeta_n).
\end{equation}
	\item The early exercise premium is given by
\begin{equation} \label{eq: early_exe_premium}
\AX^{ij}(T) - \EX^{ij}(T) = S^j_0 \expecqj \big[ \big( 1 - \rho^{ij} \big)_+ \on \zeta \leq T \big].
\end{equation}
\end{enumerate}
\end{prop}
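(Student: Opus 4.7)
The plan is to work under $\qprob^j$ using the third representation in Proposition \ref{prop: Eur_option}, namely $\EX^{ij}(\tau) = S^j_0 \expecqj[(1 - R^{ij}_\tau)_+ \on \tau < \zeta]$, and to obtain both parts simultaneously from the asymptotic behavior of $(\EX^{ij}(\tau \wedge \zeta_n))_{\nin}$ for general $\tau \in \StopT$. The driving fact will be that, under Assumption \ref{ass: main}, the process $(Y_{\zeta_n \wedge t}(S^j_{\zeta_n \wedge t} - S^i_{\zeta_n \wedge t})_+)_{\tir}$ is a nonnegative $\prob$-submartingale for each $\nin$, because $YS^i$ and $YS^j$ are $\prob$-martingales when stopped at $\zeta_n$ and the map $(a,b) \mapsto (b - a)_+$ is convex.

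Given this, optional sampling at the bounded stopping times $\tau \wedge \zeta_n \leq T \wedge \zeta_n \leq n$ yields
\[
\EX^{ij}(\tau \wedge \zeta_n) \leq \EX^{ij}(T \wedge \zeta_n) \quad \text{for every } \tau \in \StopT,
\]
where the event $\set{\cdot < \zeta}$ can be dropped from the expectations since $S^i$ and $S^j$ vanish on $\dbraco{\zeta, \infty}$. Specializing $\tau = T \wedge \zeta_m$ for $m \leq n$ immediately gives the required monotonicity of $(\EX^{ij}(T \wedge \zeta_n))_{\nin}$.

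To analyze the limit, I would pass back to the $\qprob^j$-representation, which is legitimate because $\tau \wedge \zeta_n \leq \zeta_n < \zeta$ holds $\qprob^j$-a.s. by Theorem \ref{thm: prob_exist}, and apply dominated convergence with bound $1$. On $\set{\tau < \zeta}$ one has $\tau \wedge \zeta_n = \tau$ eventually, so the integrand tends to $(1 - R^{ij}_\tau)_+$; on $\set{\tau \geq \zeta}$ one has $\tau \wedge \zeta_n = \zeta_n$ (again $\qprob^j$-a.s.), and the $\qprob^j$-a.s. convergence $R^{ij}_{\zeta_n} \to \rho^{ij}$ established at the end of Subsection \ref{subsec: ratios} delivers the limit $(1 - \rho^{ij})_+$. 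This produces the master identity
\[
\limn \EX^{ij}(\tau \wedge \zeta_n) = \EX^{ij}(\tau) + S^j_0 \expecqj[(1 - \rho^{ij})_+ \on \zeta \leq \tau].
\]

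Combined with the submartingale inequality, this identity delivers both halves at once: the chain $\EX^{ij}(\tau) \leq \limn \EX^{ij}(\tau \wedge \zeta_n) \leq \limn \EX^{ij}(T \wedge \zeta_n)$, holding for every $\tau \in \StopT$, together with the trivial $\limn \EX^{ij}(T \wedge \zeta_n) \leq \AX^{ij}(T)$, yields \eqref{eq: Amer_approx} after taking the supremum over $\tau$; specializing the master identity to $\tau = T$ then yields \eqref{eq: early_exe_premium}. The most delicate point, in my view, is ensuring that the $\qprob^j$-representation applies to $\tau \wedge \zeta_n$ for \emph{all} $\nin$ and that $\rho^{ij}$ is the genuine (not merely $\liminf$) $\qprob^j$-a.s. limit of $R^{ij}_{\zeta_n}$ on $\set{\zeta \leq \tau}$; both rely crucially on the equality $\qprob^j[\zeta_n < \zeta \text{ for all } \nin] = 1$ from Theorem \ref{thm: prob_exist} and on the nonnegative supermartingale convergence theorem applied under $\qprob^j$ via Proposition \ref{prop: OST_rel}.
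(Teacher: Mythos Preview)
Your argument is correct and the overall architecture matches the paper's proof: both pass to the $\qprob^j$-representation of Proposition \ref{prop: Eur_option}, use $\qprob^j[\zeta_n < \zeta] = 1$ from Theorem \ref{thm: prob_exist} to drop the indicator, and take the limit by dominated convergence, splitting into $\set{T < \zeta}$ and $\set{\zeta \leq T}$.

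The one genuine difference is in the monotonicity/comparison step. The paper stays under $\qprob^j$: it uses Proposition \ref{prop: OST_rel} to obtain the (conditional) supermartingale property of $R^{ij}$, and then applies Jensen's inequality with the convex map $x \mapsto (1-x)_+$ to deduce $\EX^{ij}(\tau \wedge \zeta_n) \leq \EX^{ij}(T \wedge \zeta_n)$. You instead work directly under $\prob$, observing that $Y(S^j - S^i)_+ = (YS^j - YS^i)_+$ is a submartingale once stopped at $\zeta_n$ (as a convex function of the martingale $YS^j - YS^i$, dominated by the integrable $YS^j$), and apply optional sampling at bounded times. These are dual viewpoints; yours is arguably more direct, since it uses Assumption \ref{ass: main} immediately rather than routing through the change of \num. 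Your unified ``master identity'' for general $\tau \in \StopT$ also subsumes the paper's separate appeal to Fatou's lemma for the inequality $\EX^{ij}(\tau) \leq \limn \EX^{ij}(\tau \wedge \zeta_n)$, at no extra cost.
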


\begin{proof}
In the course of the proof, fix $\iii$, $\jii$ and $T \in \Stop$. 

\noindent \emph{(1).} Let $\tau \in \StopT$. By Proposition \ref{prop: Eur_option}, and since $\qprob^j \bra{\zeta_n < \zeta} = 1$ holds for all $\nin$, we obtain
\[
\EX^{ij} (\tau \wedge \zeta_n) = S^j_0 \expecqj \bra{\big(1 - R^{ij}_{\tau \wedge \zeta_n}\big)_+ \on \tau \wedge \zeta_n < \zeta} = S^j_0 \expecqj \bra{\big(1 - R^{ij}_{\tau \wedge \zeta_n}\big)_+}.
\]
The fact $\qprob^j \bra{\zeta_n < \zeta} = 1$ and Proposition \ref{prop: supermart_rel} imply the inequality $\expecqj \big[ R^{ij}_{\tau \wedge \zeta_m} \big] \leq \expecqj \big[ R^{ij}_{\tau \wedge \zeta_n} \big]$ whenever $\Natural \ni n \leq m \in \Natural$. The convexity of the function $\Real \ni x \mapsto x_+ \in \Real_+$ and Jensen's inequality imply that $\EX^{ij} (\tau \wedge \zeta_n) \leq \EX^{ij} (\tau \wedge \zeta_m)$ holds whenever $\Natural \ni n \leq m \in \Natural$, which shows that the sequence $\pare{\EX^{ij} (\tau \wedge \zeta_n)}_{\nin}$ is nondecreasing. Furthermore, in view of the fact that $\limn \zeta_n = \zeta$, it $\qprob^j$-a.s. holds that $(1 - R^{ij}_\tau)_+ \indic_{\set{\tau < \zeta }} \leq \liminf_{n \to \infty} \big( (1 - R^{ij}_{\tau \wedge \zeta_n})_+ \big)$. This fact,  coupled with Fatou's lemma, implies that
\[
\EX^{ij} (\tau) = \expecqj \bra{(1 - R^{ij}_\tau)_+ \on \tau < \zeta } \leq \expecqj \bra{\liminf_{n \to \infty} \pare{ (1 - R^{ij}_{\tau \wedge \zeta_n})_+ }} \leq \limn \EX^{ij} (\tau \wedge \zeta_n).
\]
In a similar way as was reasoned above, Proposition \ref{prop: supermart_rel} and the facts that $\qprob^j \bra{\zeta_n < \zeta} = 1$ for all $\nin$ and $\tau \leq T$ give $\EX^{ij} (\tau \wedge \zeta_n) \leq \EX^{ij} (T \wedge \zeta_n)$ for all $\nin$; therefore, $\EX^{ij} (\tau) \leq \limn \EX^{ij} (T \wedge \zeta_n)$ holds for all $\tau \in \StopT$. Equation \eqref{eq: Amer_approx} immediately follows.

\noindent \emph{(2).} Since $\limn R^{ij}_{T \wedge \zeta_n}  = \rho^{i j} \indic_{\set{\zeta \leq T}} + R^{ij}_T \indic_{\set{T < \zeta}}$ holds $\qprob^j$-a.s., the dominated convergence theorem gives
\[
\AX^{ij}(T) = \limn \EX^{ij} (T \wedge \zeta_n) = S_0^j \expecqj \bra{ \big( 1 - \rho^{ij} \big)_+ \on \zeta \leq T } + S_0^j \expecqj \bra{ \big( 1 - R^{ij}_T \big)_+ \on T < \zeta}.
\]
By Proposition \ref{prop: Eur_option}, the second term in the right-hand-side of the the above equation is equal to $\EX^{ij}(T)$; therefore, \eqref{eq: early_exe_premium} has been established.
\end{proof}

\begin{rem} \label{rem: Ame_approx_same}
Proposition \ref{prop: Ame_option} implies that, for any $T \in \Stop$, the supremum in \eqref{eq: amer_value_original} for $\AX^{ij}(T)$ is monotonically achieved through the sequence $\pare{T \wedge \zeta_n}_{\nin}$ of stopping times in $\StopT$, this being true for all combinations of $\iii$ and $\jii$. This fact has the important consequence that a parity for American exchange options follows from the corresponding parity for European options---see the statement and proof of Proposition \ref{prop: parity_main}.
\end{rem}

\begin{rem}
While in the Black-Scholes-Merton modelling environment discussed in \cite{RePEc:bla:jfinan:v:33:y:1978:i:1:p:177-86} it is never optimal to exercise an American-style exchange option before a \emph{finite} maturity $T \in \Stop$, Proposition \ref{prop: Ame_option} implies that, if $\qprob^j \bra{\zeta \leq T} > 0$ holds, it is not optimal to keep an American option to exchange any asset $\iii$ for some asset $\jii$ until maturity $T \in \Stop$. Instead, \eqref{eq: Amer_approx} reasonably suggests that one should keep the option until maturity $T \in \Stop$ \emph{provided} that the end of the whole economy does not appear imminent; otherwise, early exercise may be preferable.
\end{rem}

\begin{rem}
Formulas like \eqref{eq: Amer_approx} have appeared in \cite{MR2276895}, as ``corrected'' values for European-style options. In fact, Proposition \ref{prop: Ame_option} implies that they correspond to values of American-style options.
\end{rem}

\begin{rem}
Using the (self-explanatory) notation $R^{ij}_{T \wedge  (\zeta -) } = R^{ij}_T \indic_{\set{T < \zeta}} + \rho^{ij} \indic_{\set{\zeta \leq T}}$ for $\iii$, $\jii$ and $T \in \Stop$, it follows by a combination of Proposition \ref{prop: Eur_option} and Proposition \ref{prop: Ame_option} that
\[
\AX^{ij} (T) = S_0^j \expecqj \bra{ \pare{1 - R^{ij}_{T \wedge  (\zeta -) }}_+ },
\]
which provides a direct representation for the value of American-style exchange options.
\end{rem}

\begin{rem}
The formulas in Propositions \ref{prop: Eur_option} and \ref{prop: Ame_option} open the way in the numerical approximation of European and American exchange option values, as well as early exercise premia. Indeed, in the setting of Example \ref{exa: Markov_factor_reprise} (which continues the discussion in Subsection \ref{exa: Markov_factor}) one can use standard Monte-Carlo simulation techniques in order to identify the corresponding expectations; one simply needs to identify $\zeta$ with $\zeta_n$ for some large $\nin$, for the sequence $(\zeta_n)_{\nin}$ which is given in \eqref{eq: zeta_n_explicit}. This procedure can also be used for calibration of parametric models to match European and American exchange option prices observed in the market.
\end{rem}

An interesting special case in Proposition \ref{prop: Ame_option} is when $\qprob^j \bra{\rho^{ij} = 0} = 1$ holds for some $\iii$ and $\jii$; this is, for example, true in the case in the Black-Scholes-Merton model where the logarithms of asset-price processes are (not perfectly) correlated drifted Brownian motions. When $\qprob^j \bra{\rho^{ij} = 0} = 1$ holds for $\iii$ and $\jii$, the simpler formula $\AX^{ij}(T) - \EX^{ij}(T) = S^j_0 \qprob^j[\zeta \leq T]$ for the early exercise premium holds for all $T \in \Stop$. The next result gives several equivalent formulations of the latter condition.

\begin{prop} \label{prop: ass_amer}
Fix $\iii$ and $\jii$, as well as any sequence $(\zeta_n)_{\nin}$ which foretells $\zeta$ under all $\pare{\qprob^i}_{\iii}$ (see Remark \ref{rem:foretellable_exists}). Under Assumption \ref{ass: main}, the following statements are equivalent:
\begin{enumerate}
	\item $\limn \AX^{ij}(\zeta_n) = S^j_0$.
	\item $\limn \EX^{ij}(\zeta_n) = S^j_0$.
	\item $\limn \qprob^j \big[ S^j_{\zeta_n} \leq S^i_{\zeta_n} \big] = 0$ and $\limn \qprob^i \big[ S^i_{\zeta_n} \leq S^j_{\zeta_n} \big] = 0$.
	\item $\qprob^j \bra{ \rho^{ij}  = 0} = 1$.
	\item $\AX^{ij}(T) - \EX^{ij}(T) = S^j_0 \qprob^j[\zeta \leq T]$ holds for all $T \in \Stop$.
\end{enumerate}
%Under any of the above equivalent conditions,
%\[
%\AX^{ij}(T) - \EX^{ij}(T) = S^j_0 \qprob^j[\zeta \leq T] \text{ and } \AX^{ij}(T) - \EX^{ij}(T) = S^j_0 \qprob^j[\zeta \leq T] \text{ holds for all } T \in \Stop.
%\]
\end{prop}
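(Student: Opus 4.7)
The plan is to establish the chain $(1) \Leftrightarrow (2) \Leftrightarrow (4)$ together with $(2) \Leftrightarrow (3)$ by direct computation, and then to treat $(4) \Leftrightarrow (5)$ separately.

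For $(1) \Leftrightarrow (2)$, the key observation is that applying Proposition \ref{prop: Ame_option}(1) with $T = \infty \in \Stop$ yields $\AX^{ij}(\infty) = \limn \EX^{ij}(\zeta_n)$. Combining this with the sandwich $\EX^{ij}(\zeta_n) \leq \AX^{ij}(\zeta_n) \leq \AX^{ij}(\infty)$ forces the three quantities $\limn \EX^{ij}(\zeta_n)$, $\limn \AX^{ij}(\zeta_n)$, and $\AX^{ij}(\infty)$ to coincide; thus both $(1)$ and $(2)$ are equivalent to the relation $\AX^{ij}(\infty) = S^j_0$.

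For $(2) \Leftrightarrow (4)$, I would substitute $T = \zeta_n$ into the third representation of Proposition \ref{prop: Eur_option} and use $\qprob^j\bra{\zeta_n < \zeta} = 1$ from Theorem \ref{thm: prob_exist} to rewrite $\EX^{ij}(\zeta_n) = S^j_0 \expecqj \bra{\pare{1 - R^{ij}_{\zeta_n}}_+}$. Since $\pare{1 - R^{ij}_{\zeta_n}}_+ \leq 1$ and $R^{ij}_{\zeta_n} \to \rho^{ij}$ holds $\qprob^j$-a.s.\ (as noted in the paragraph after \eqref{eq: rho_ij}), dominated convergence yields $\limn \EX^{ij}(\zeta_n) = S^j_0 \expecqj\bra{\pare{1 - \rho^{ij}}_+}$. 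Because $\rho^{ij} \geq 0$ $\qprob^j$-a.s., this limit equals $S^j_0$ iff $\rho^{ij} = 0$ $\qprob^j$-a.s. For $(2) \Leftrightarrow (3)$, I would instead substitute $T = \zeta_n$ into the first and second representations of Proposition \ref{prop: Eur_option}, drop the $\set{\zeta_n < \zeta}$ restriction (on which both $\qprob^i$ and $\qprob^j$ concentrate), and rearrange to obtain
$$S^j_0 - \EX^{ij}(\zeta_n) = S^j_0 \qprob^j\bra{S^j_{\zeta_n} \leq S^i_{\zeta_n}} + S^i_0 \qprob^i\bra{S^i_{\zeta_n} < S^j_{\zeta_n}},$$
together with the analogous identity in which strict and non-strict inequalities are interchanged. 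Nonnegativity of the four probabilities, combined with domination of the strict versions by the non-strict ones, makes $\limn \EX^{ij}(\zeta_n) = S^j_0$ equivalent to condition $(3)$.

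Finally, for $(4) \Leftrightarrow (5)$, the implication $(4) \Rightarrow (5)$ is immediate from \eqref{eq: early_exe_premium}: $\rho^{ij} = 0$ $\qprob^j$-a.s.\ collapses the right-hand side to $S^j_0 \qprob^j\bra{\zeta \leq T}$. The main obstacle is the converse, since \eqref{eq: early_exe_premium} together with $(5)$ only constrains $\rho^{ij}$ on events $\set{\zeta \leq T}$ with $T < \infty$, leaving behavior on $\set{\zeta = \infty}$ untouched. The resolution is to evaluate $(5)$ at $T = \infty \in \Stop$: because $\set{\infty < \zeta} = \emptyset$ forces $\EX^{ij}(\infty) = 0$, one reads off $\AX^{ij}(\infty) = S^j_0 \qprob^j\bra{\zeta \leq \infty} = S^j_0$, which is $(1)$, and $(4)$ follows via the equivalences already established.
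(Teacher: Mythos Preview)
Your proof is correct and follows essentially the same route as the paper, with only minor tactical differences. For $(1)\Leftrightarrow(2)$, the paper observes directly that $\AX^{ij}(\zeta_n)=\lim_{m\to\infty}\EX^{ij}(\zeta_n\wedge\zeta_m)=\EX^{ij}(\zeta_n)$ for each $n$ (since $\zeta_n\wedge\zeta_m=\zeta_n$ for $m\geq n$), whereas you use a sandwich through $\AX^{ij}(\infty)$; both work. For $(5)\Rightarrow(4)$, the paper substitutes $T=\zeta$ into \eqref{eq: early_exe_premium} to get $\expecqj\big[(1-\rho^{ij})_+\big]=1$ directly, while you reach the same conclusion by routing through $T=\infty$ and the already-established $(1)\Leftrightarrow(4)$. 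Your treatment of $(2)\Leftrightarrow(3)$ via the two identities coming from the first and second representations in Proposition~\ref{prop: Eur_option} is actually a touch cleaner than the paper's argument, which handles the strict versus non-strict inequality separately via the identity $S^j_0\,\qprob^j[S^j_{\zeta_n}=S^i_{\zeta_n}]=S^i_0\,\qprob^i[S^j_{\zeta_n}=S^i_{\zeta_n}]$.
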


\begin{proof}
Fix $\iii$ and $\jii$. By Proposition \ref{prop: Ame_option}, $\AX^{ij} (\zeta_n) = \lim_{m \to \infty} \EX^{ij} (\zeta_n \wedge \zeta_m) = \EX^{ij} (\zeta_n)$ holds for all $\nin$. This shows the equivalence of statements (1) and (2). Furthermore, since $\qprob^i \bra{\zeta_n < \zeta} = 1$ and $\qprob^j \bra{\zeta_n < \zeta} = 1$ holds for all $\nin$, $\EX^{ij} (\zeta_n) = S^j_0 \qprob^j \big[ S^i_{\zeta_n} < S^j_{\zeta_n}  \big] - S^i_0 \qprob^i \big[ S^i_{\zeta_n} < S^j_{\zeta_n} \big]$ follows from Proposition \ref{prop: Eur_option}. Therefore,  $\limn \EX^{ij}(\zeta_n) = S^j_0$ is equivalent to the validity of both $\limn \qprob^j \big[ S^j_{\zeta_n} \leq S^i_{\zeta_n} \big] = 0$ and $\limn \qprob^i \big[ S^i_{\zeta_n} < S^j_{\zeta_n} \big] = 0$. Since
\[
S^j_0 \qprob^j \big[ S^j_{\zeta_n}  = S^i_{\zeta_n} \big] = \prob \big[ S^j_{\zeta_n} = S^i_{\zeta_n}, \, \zeta_n < \zeta  \big] = S^i_0 \qprob^j \big[ S^j_{\zeta_n} = S^i_{\zeta_n} \big]
\]
holds in view of Theorem \ref{thm: prob_existnz}, $\limn \EX^{ij}(\zeta_n) = S^j_0$ is equivalent to $\limn \qprob^j \big[ S^j_{\zeta_n} \leq S^i_{\zeta_n} \big] = 0$ and $\limn \qprob^i \big[ S^i_{\zeta_n} \leq S^j_{\zeta_n} \big] = 0$. This shows the equivalence of (2) and (3). Therefore, the equivalence of conditions (1), (2) and (3) has been established. Continuing, a combination of Proposition \ref{prop: Eur_option} and the dominated convergence theorem give $\limn \EX^{ij} (\zeta_n) = S_0^j \expecqj \bra{(1 - \rho^{ij})_+}$. Therefore, conditions (2) and (4) are equivalent. The fact that condition (4) implies condition (5) follows from \eqref{eq: early_exe_premium}. Furthermore, if (5) holds then \eqref{eq: early_exe_premium} with $T = \zeta$ gives $\expecqj \big[ \pare{1 - \rho^{ij}}_+ \big] = 1$, which is equivalent to $\qprob^j \bra{\rho^{ij} = 0} = 1$, i.e., condition (4).
\end{proof}

\begin{rem}
Note that condition (3) of Proposition \ref{prop: ass_amer} is symmetric in $\iii$ and $\jii$. This means that conditions (1), (2), (4) and (5) of Proposition \ref{prop: ass_amer} are also equivalent to the corresponding conditions where the roles of $i$ and $j$ are interchanged.
\end{rem}

\begin{rem}
Fix $\iii$ and $\jii$. Under any of the equivalent conditions of Proposition \ref{prop: ass_amer}, the equality $\AX^{ij}(T) = S^j_0$ holds whenever $T \in \Stop$ is such that $T \geq \zeta$. In fact, one can get an expression for the difference $S^j_0  - \AX^{ij}(T)$ for all $T \in \Stop$. Assuming any of the equivalent conditions of Proposition \ref{prop: ass_amer}, $S_0^j - \AX^{ij}(T) = S_0^j \qprob^j \bra{T < \zeta} - \EX^{ij}(T)$ holds for all $T \in \Stop$. Since $\EX^{ij}(T) = S_0^j \expecqj \big[ (1 - R^{ij}_T)_+ \on T < \zeta \big]$ holds by Proposition \ref{prop: Eur_option}, we obtain
\[
S_0^j - \AX^{ij}(T) = S_0^j \expecqj \bra{ 1 \wedge R^{ij}_T  \on T < \zeta} = S_0^j \qprob^j \bra{ R^{ij}_T \geq 1, \, T < \zeta} + S_0^j \expecqj \bra{ R^{ij}_T \on R^{ij}_T < 1, \, T < \zeta}.
\]
Now, $\qprob^j \big[ R^{ij}_T \geq 1, \, T < \zeta \big] = \qprob^j \big[ S^j_T \leq S^i_T, \, S^j_T > 0, \, T < \zeta \big] = \qprob^j \big[ S^j_T \leq S^i_T, \ T < \zeta \big]$, the last equality following from $\qprob^j \big[ S^j_T =0, T < \zeta \big] = 0$ in Remark \ref{rem: zero_prob_zero}. Furthermore, note that \eqref{eq: measure_change_R} gives
\[
S_0^j \expecqj \bra{ R^{ij}_T \on R^{ij}_T < 1, \, T < \zeta} = S_0^i \qprob^i \big[ S^i_T < S^j_T, \, S^j_T > 0, \, T < \zeta \big] = S_0^i \qprob^i \big[ S^i_T < S^j_T,  \, T < \zeta \big],
\]
the last equality following from the nonnegativity of $S^i$. It follows that
\[
S_0^j - \AX^{ij}(T) = S^j_0 \qprob^j \big[ S^j_T \leq S^i_T, \, T < \zeta \big] + S^i_0 \qprob^i \big[ S^i_T < S^j_T, \, T < \zeta \big].
\]
\end{rem}

\section{Parities Involving Exchange Options} \label{sec: parity}

\subsection{Parities} The following result gives two parities---one regarding European-style and another regarding American-style exchange options.

\begin{prop} \label{prop: parity_main}
Let $\iii$ and $\jii$, as well as $T \in \Stop$. Under Assumption \ref{ass: main}, the following parities hold:
\begin{eqnarray}
\EX^{ij} (T) + S_0^i \qprob^i \bra{T < \zeta} &=& \EX^{ji} (T) + S_0^j \qprob^j \bra{T < \zeta}, \label{eq: parity_eur_eur} \\
\AX^{ij} (T) + S_0^i &=& \AX^{ji} (T) + S_0^j \label{eq: parity_ame_ame}.
\end{eqnarray}
\end{prop}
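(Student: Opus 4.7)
The plan is to derive the European parity relation first, essentially from the algebraic identity $(a-b)_+ - (b-a)_+ = a - b$, and then bootstrap it to the American case via the approximation formula \eqref{eq: Amer_approx} from Proposition \ref{prop: Ame_option}, which is precisely the role that Remark \ref{rem: Ame_approx_same} advertises.

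For \eqref{eq: parity_eur_eur}, I would start from the definition \eqref{eq: eur_value_original} of $\EX^{ij}(T)$ and $\EX^{ji}(T)$, subtract the two and use $(S^j_T - S^i_T)_+ - (S^i_T - S^j_T)_+ = S^j_T - S^i_T$ to get
\[
\EX^{ij}(T) - \EX^{ji}(T) = \expecp \bra{ Y_T S^j_T \on T < \zeta } - \expecp \bra{ Y_T S^i_T \on T < \zeta }.
\]
Applying \eqref{eq: measure_change} with $H \equiv 1$ (once with $j$ in place of $i$ and once with $i$ itself) gives $\expecp \bra{ Y_T S^j_T \on T < \zeta } = S_0^j \qprob^j [T < \zeta]$ and the analogous equality with $i$, from which \eqref{eq: parity_eur_eur} follows by rearrangement.

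For \eqref{eq: parity_ame_ame}, the key observation is that in \eqref{eq: parity_eur_eur} applied at the stopping time $T \wedge \zeta_n$ in place of $T$, the probabilities $\qprob^i [T \wedge \zeta_n < \zeta]$ and $\qprob^j [T \wedge \zeta_n < \zeta]$ are both equal to $1$: indeed, $T \wedge \zeta_n \leq \zeta_n$, and Theorem \ref{thm: prob_exist} gives $\qprob^i[\zeta_n < \zeta] = 1 = \qprob^j[\zeta_n < \zeta]$. Therefore \eqref{eq: parity_eur_eur} at $T \wedge \zeta_n$ becomes
\[
\EX^{ij}(T \wedge \zeta_n) + S_0^i = \EX^{ji}(T \wedge \zeta_n) + S_0^j.
\]
Sending $n \to \infty$ and invoking \eqref{eq: Amer_approx} for both pairs $(i,j)$ and $(j,i)$ yields \eqref{eq: parity_ame_ame}.

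I do not anticipate a serious obstacle here: the European identity is essentially arithmetic once one rewrites the expectations as probabilities via \eqref{eq: measure_change}, and the American identity reduces to it through the already-established monotone approximation in Proposition \ref{prop: Ame_option}. The only mildly subtle point is ensuring that the ``defective'' terms $S_0^i \qprob^i[T<\zeta]$ and $S_0^j \qprob^j[T<\zeta]$ in \eqref{eq: parity_eur_eur} become the full $S_0^i$ and $S_0^j$ in \eqref{eq: parity_ame_ame}; this is exactly what the property $\qprob^j[\zeta_n < \zeta] = 1$ from Theorem \ref{thm: prob_exist} provides, and why localization along $(\zeta_n)_\nin$ eliminates the event $\{T \geq \zeta\}$ from the parity statement in the American case.
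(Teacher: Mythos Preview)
Your proof is correct. For \eqref{eq: parity_ame_ame} your argument is identical to the paper's; for \eqref{eq: parity_eur_eur} you use the algebraic identity $(a-b)_+ - (b-a)_+ = a-b$ together with \eqref{eq: measure_change}, which is exactly the ``alternative, more direct'' argument the paper records in the Remark following Proposition~\ref{prop: parity_main}, whereas the paper's official proof instead subtracts the two representation formulas of Proposition~\ref{prop: Eur_option}. The difference is cosmetic: your route avoids invoking Proposition~\ref{prop: Eur_option} at all, at the price of appealing directly to \eqref{eq: measure_change}, while the paper's route packages that appeal inside the earlier proposition.
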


\begin{proof}
Combining the relationships $\EX^{ij}(T) = S^j_0 \qprob^j \big[ S^i_T < S^j_T, \, T < \zeta \big] - S^i_0 \qprob^i \big[ S^i_T < S^j_T, \, T < \zeta \big]$ and $\EX^{ji}(T) = S^i_0 \qprob^i \big[ S^j_T \leq S^i_T, \, T < \zeta \big] - S^j_0 \qprob^j \big[ S^j_T \leq S^i_T, \, T < \zeta \big]$, both following from Proposition \ref{prop: Eur_option}, one obtains $\EX^{ij}(T) - \EX^{ji}(T) = S_0^j \qprob^j \bra{T < \zeta} - S_0^i \qprob^i \bra{T < \zeta}$, which shows \eqref{eq: parity_eur_eur}. Let $(\zeta_n)_{\nin}$ be a sequence which foretells $\zeta$ under all $\pare{\qprob^i}_{\iii}$. Replacing $T$ by $T \wedge \zeta_n$ and using the fact that $\qprob^i \bra{\zeta_n < \zeta} = 1 = \qprob^j \bra{\zeta_n < \zeta}$ holds for all $\nin$, we obtain $\EX^{ij} (T \wedge \zeta_n) + S_0^i = \EX^{ji} (T \wedge \zeta_n) + S_0^j$. Sending $n$ to infinity and using \eqref{eq: Amer_approx}, \eqref{eq: parity_ame_ame} follows. 
\end{proof}

\begin{rem}
An alternative, more direct proof of \eqref{eq: parity_eur_eur} utilizes the equality
\begin{equation} \label{eq: parity_pro}
(S^j - S^i)_+ + S^i = (S^i - S^j)_+ + S^j, \quad \text{for } \iii \text{ and } \jii.
\end{equation}
Applying \eqref{eq: parity_pro} with the processes sampled at $T \in \Stop$ on the event $\set{T < \zeta}$, multiplying both sides by $Y_T$ and taking expectation with respect to $\prob$, one obtains \eqref{eq: parity_eur_eur} by Proposition \ref{prop: Eur_option}, given the equalities $\expecp \big[ Y_T S^i_T \on T < \zeta \big] = S_0^i \qprob^i \bra{T < \zeta}$ and $\expecp \big[ Y_T S^j_T \on T < \zeta \big] = S_0^j \qprob^j \bra{T < \zeta}$ that follow from \eqref{eq: measure_change}.

For $\iii$, the quantity $S_0^i \qprob^i \bra{T < \zeta}$ is the value of the contract that pays $S^i_T$ at time $T \in \Stop$ when $T < \zeta$. Being a European-style contract, its value may be \emph{strictly} less than $S_0^i$, which happens exactly when $\qprob^i \bra{\zeta \leq T} > 0$. In contrast, the value of the corresponding ``American'' option that pays $S^i_\tau$ at any chosen time $\tau \in \StopT$ for $T \in \Stop$ would be
\begin{equation} \label{eq: Ame_stock}
\sup_{\tau \in \StopT} \expecp \bra{Y_\tau S^i_\tau \on \tau < \zeta} = \sup_{\tau \in \StopT}  S_0^i \qprob^i \bra{\tau < \zeta} = S_0^i \qprob^i \bra{\zeta > 0} = S_0^i,
\end{equation}
since $\zeta > 0$ holds identically (recall the set-up of Subsection \ref{subsec: prelims}.). In models where $\qprob^i \bra{\zeta < \infty} = 0$ is valid for all $\iii$, $\EX^{ij} (T) = \AX^{ij}(T)$ holds for all $\iii$, $\jii$ and $T \in \Stop$ with $T < \infty$. Then, \eqref{eq: parity_ame_ame} becomes a parity for both American-style and European-style exchange options (the latter upon replacing $\AX^{ij}$ by $\EX^{ij}$ and $\AX^{ji}$ by $\EX^{ji}$). The fact that \eqref{eq: parity_eur_eur}, instead of \eqref{eq: parity_ame_ame}, holds for European options has sometimes lead to claims that the ``usual'' parity is not valid in markets where bubbles exist. Of course, in order for a parity to hold, the contracts used have to be of similar type. In this sense, \eqref{eq: parity_eur_eur} is the correct and perfectly valid parity for European options; this has already been made clear in \cite{MR2732840}, in the setting of the example of Subsection \ref{sec: example} below. On the other hand, when American-style exchange options are involved, American-style contracts that pay off the stock price have to be used in both sides; in view of \eqref{eq: Ame_stock}, \eqref{eq: parity_ame_ame} is the parity to be expected. As noted in Remark \ref{rem: Ame_approx_same} and demonstrated in the proof of Proposition \ref{prop: parity_main}, the American parity \eqref{eq: parity_ame_ame} follows from the validity of \eqref{eq: parity_eur_eur} and the fact that the approximating sequence $(T \wedge \zeta_n)_{\nin}$ is the same for all choices of $\iii$ and $\jii$.
\end{rem}

In the special case where any of the equivalent conditions of Proposition \ref{prop: ass_amer} hold, two more parities are valid, mixing European and American options.

\begin{prop} \label{prop: more put_call}
Under Assumption \ref{ass: main} and the validity of any of the equivalent conditions of Proposition \ref{prop: ass_amer}, the following parities hold:
\begin{eqnarray*}
\AX^{ij} (T) + S_0^i \qprob^i \bra{T < \zeta} &=& \EX^{ji} (T) + S_0^j, \\
\EX^{ij} (T) + S_0^i  &=& \AX^{ji} (T) + S_0^j \qprob^j \bra{T < \zeta}.
\end{eqnarray*}
\end{prop}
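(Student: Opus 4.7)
The plan is to derive both identities as compact algebraic consequences of the European parity relation \eqref{eq: parity_eur_eur} of Proposition \ref{prop: parity_main}, the early exercise premium formula of condition (5) of Proposition \ref{prop: ass_amer}, and the trivial partition identity $\qprob^j \bra{T < \zeta} + \qprob^j \bra{\zeta \leq T} = 1$ (and the analogous identity under $\qprob^i$).

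First, I would invoke the remark following Proposition \ref{prop: ass_amer}, which observes that condition (3) there is symmetric in $i$ and $j$; consequently, condition (5) is valid with the roles of $i$ and $j$ interchanged as well. Hence, under the standing hypotheses,
\[
\AX^{ij}(T) = \EX^{ij}(T) + S_0^j \qprob^j \bra{\zeta \leq T} \quad \text{and} \quad \AX^{ji}(T) = \EX^{ji}(T) + S_0^i \qprob^i \bra{\zeta \leq T}
\]
both hold for every $T \in \Stop$.

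To obtain the first relation I would start from \eqref{eq: parity_eur_eur} and add $S_0^j \qprob^j \bra{\zeta \leq T}$ to both sides. On the left, this combines with $\EX^{ij}(T)$ to produce $\AX^{ij}(T)$ via the first displayed formula above, so the left-hand side becomes $\AX^{ij}(T) + S_0^i \qprob^i \bra{T < \zeta}$. On the right, $S_0^j \qprob^j \bra{T < \zeta} + S_0^j \qprob^j \bra{\zeta \leq T} = S_0^j$, so the right-hand side collapses to $\EX^{ji}(T) + S_0^j$. For the second relation I would instead add $S_0^i \qprob^i \bra{\zeta \leq T}$ to both sides of \eqref{eq: parity_eur_eur}, using the second displayed formula above to recognise $\AX^{ji}(T)$ on the right; alternatively, the second relation is obtained from the first simply by interchanging the roles of $i$ and $j$ (legitimate precisely because the conditions of Proposition \ref{prop: ass_amer} are symmetric in $i$ and $j$).

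There is no substantive obstacle here: the argument is a short algebraic assembly of already-established ingredients. The only point that needs care is to invoke the symmetry built into Proposition \ref{prop: ass_amer} so that the early exercise premium formula is available in both directions.
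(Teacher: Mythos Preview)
Your proof is correct and follows essentially the same approach as the paper: both derive the two mixed parity relations from the European parity \eqref{eq: parity_eur_eur} together with the early exercise premium identity $\AX^{ij}(T) - \EX^{ij}(T) = S_0^j \qprob^j[\zeta \leq T]$ (and its $i\leftrightarrow j$ counterpart, licensed by the symmetry noted after Proposition \ref{prop: ass_amer}). Your explicit invocation of that symmetry is a useful point of care that the paper's proof leaves implicit.
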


\begin{proof}
Since Proposition \ref{prop: ass_amer} gives $\AX^{ij} (T) = \EX^{ij} (T) + S_0^j \qprob^j \bra{T < \zeta}$ and $\AX^{ji} (T) = \EX^{ji} (T) + S_0^i \qprob^i \bra{T < \zeta}$, both relationships follow directly from \eqref{eq: parity_eur_eur}.  
\end{proof}

\begin{rem}
The underlying reason for the parities in Proposition \ref{prop: more put_call} under any of the equivalent conditions of Proposition \ref{prop: ass_amer} is that the early exercise premium of the exchange option with payoff $(S^j_T - S^i_T)_+$ at time $T \in \Stop$ for $\iii$ and $\jii$ coincides with the difference between the asset price $S^j_0$ and $S^j_0 \qprob \bra{\zeta > T}$, the latter being the ``European value of a claim that pays $S^j_T$ at time $T$.''
\end{rem}

\subsection{An illustrative example involving the three-dimensional Bessel process} \label{sec: example}
Consider the case where $E = (0, \infty)$ and $\prob$ is such that $Z$ under $\prob$ is behaving like a three-dimensional Bessel process with unit initial value. Note that $\prob \bra{\zeta < \infty} = 0$. Let $I = \set{0,1}$, and suppose that $S^0 = K \indic_{\dbraco{0, \zeta}}$ for some $K \in (0, \infty)$ and $S^1 \equiv Z \indic_{\dbraco{0, \zeta}}$. It can be shown in a straightforward way that $Y = (1 / Z) \indic_{\dbraco{0, \zeta}}$ is the (essentially, modulo $\prob$-evanescence) unique process such that $Y S^i$ is a local martingale on $\basisp$ for $i \in I$. Clearly $\qprob^1 = \prob$, while $\qprob^0$ can be seen to coincide with the probability on $\F$ such that $Z$ is Brownian motion starting from one and killed when it reaches zero. The equality $\qprob^1 \bra{\rho^{01} = 0} = \prob \bra{\lim_{t \to \infty} Z_t = \infty} = 1$ follows from the fact that $Z$ behaves like three-dimensional Bessel process under $\prob$. In particular, we obtain all relations of Proposition \ref{prop: ass_amer} when $i = 0$ and $j =1$, as well as when $i = 1$ and $j = 0$.

As $\qprob^1 \bra{\zeta < \infty} = \prob \bra{\zeta < \infty} = 0$, it follows that $\AX^{01} (T) = \EX^{01} (T)$ holds for all $T \in \Real_+$. Furthermore, Proposition \ref{prop: Eur_option} gives $\EX^{01} (T) = \prob \bra{Z_T > K} - K \qprob^0 \bra{Z_T > K, \zeta > T}$ for $T \in \Real_+$. Let $\Phi : \Real \mapsto (0,1)$ denote the cumulative distribution function of the standard normal law, and set $\oPhi = 1 - \Phi$. The joint distribution of Brownian motion and its minimum gives
\[
\qprob^0 \bra{Z_T > K, \zeta > T} = \Phi \pare{ \frac{1 - K}{\sqrt{T}} } - \oPhi \pare{ \frac{1 + K}{\sqrt{T}} }, \quad T \in \Real_+.
\]
Furthermore, from properties of the non-central chi-squared distribution one can obtain that
\[
\prob \bra{Z_T > K} = \Phi \pare{ \frac{1 - K}{\sqrt{T}} } + \oPhi \pare{ \frac{1 + K}{\sqrt{T}} } + \sqrt{\frac{2 T}{\pi}} \exp \pare{- \frac{1 + K^2}{2 T}} \sinh \pare{ \frac{K}{T}}, \quad T \in \Real_+.
\]
(For the last formula see also \cite[Proposition 1]{MR2732840}.) It then follows that
\[
\EX^{01} (T) = (1 + K) \oPhi \pare{ \frac{1 + K}{\sqrt{T}} } + (1- K) \Phi \pare{ \frac{1 - K}{\sqrt{T}} } + \sqrt{\frac{2 T}{\pi}} \exp \pare{- \frac{1 + K^2}{2 T}} \sinh \pare{ \frac{K}{T}}, \quad T \in \Real_+,
\]
with the same equality valid for $\AX^{01} (T)$. Equation \eqref{eq: parity_ame_ame} gives $\AX^{10} (T) = \AX^{01} (T) - (1 - K)$, i.e.,
\[
\AX^{10} (T) = (1 + K) \oPhi \pare{ \frac{1 + K}{\sqrt{T}} } - (1- K) \oPhi \pare{ \frac{1 - K}{\sqrt{T}} } + \sqrt{\frac{2 T}{\pi}} \exp \pare{- \frac{1 + K^2}{2 T}} \sinh \pare{ \frac{K}{T}}, \quad T \in \Real_+.
\]
Furthermore, the law of the minimum of Brownian motion gives $\qprob^0 \bra{\zeta \leq T} = 2 \oPhi(1 / \sqrt{T})$ holding $T \in \Real_+$, which implies that $\EX^{10} (T) = \AX^{10} (T) - 2 K \oPhi(1 / \sqrt{T})$ holds for $T \in \Real_+$.

Note that the previous closed-form expressions give $\lim_{T \to \infty} \EX^{01} (T) = 1 = \lim_{T \to \infty} \AX^{01} (T)$, as well as $\lim_{T \to \infty} \EX^{10} (T) = 0 < K = \lim_{T \to \infty} \AX^{10} (T)$.

\bibliographystyle{alpha}
\bibliography{exchange}
\end{document}